\newtheorem{thm}{Theorem}
\newtheorem{lemma}{Lemma}
\begin{document}

\title[Part II: stationary sources satisfying $\psi$-mixing criterion]{Layered black-box, behavioral interconnection perspective and applications to problems in communications, Part II: stationary sources satisfying $\psi$-mixing criterion}

\author[M. Agarwal, S. Mitter, and A. Sahai]{Mukul Agarwal, Sanjoy Mitter, and Anant Sahai}

\begin{abstract}
Theorems from Part 1 of this paper are generalized to stationary, $\psi$-mixing sources in this paper. As a consequence, these theorems are proved for Markoff chains and order $m$ Markoff chains. The main result is the generalization of Theorem 1 in Part 1.
\end{abstract}

\maketitle

\section{Introduction}

In this paper, we generalize results Theorems 1-4 from Part 1 of this paper \cite{paperTogether} to the case when the source $X$ is not necessarily i.i.d. but stationary and satisfies a mixing condition, the $\psi$-mixing criterion (which implies that the process is also ergodic). As a corollary, the results hold for Markoff chains and order $m$ Markoff chains.

In Part 1 of this paper, a direct equivalence was drawn between the problem of communicating an i.i.d. source $X$ to within a certain distortion level $D$ over an essentially unknown channel and reliable communication at rates less than the rate-distortion function $R_X(D)$ over the channel. As a result, assuming random codes are permitted, a source-channel separation theorem was proved for communication over a general, compound channel, where the channel model is general in the sense of Verdu-Han and compound in the sense that the channel may belong to a set. These theorems were then generalized to the unicast, multi-user setting where the sources were still assumed to be i.i.d.

In this paper, these theorems from Part 1 are generalized to the case when the source (also sources in the unicast multi-user setting) are not necessarily i.i.d. but satisfy a mixing criterion called the $\psi$-mixing criterion.

\section{Paper outline}

In Section \ref{S3}, the notation and deﬁnitions used in this paper are described. This is followed by a description of the $\psi$-mixing condition, its properties and intuition for it in Section~\ref{TheMix}; the proofs of these properties can be found in Appendix \ref{Append}. In Section~\ref{S5}, the high-level idea of the proof of the generalization of Theorem 1 of \cite{paperTogether} to this paper is stated. A simulation procedure is required in order to bring this high-level idea to fruition and this is the subject of Section~\ref{S6}. This is followed by the statement of the main lemma of this paper, Lemma \ref{MainMainLemma}, which uses the simulation procedure of the previous section to prove a result which is the heart of this paper and the heart of what is needed in order to generalize Theorem 1 of \cite{paperTogether} to $\psi$-mixing sources: this is the subject of Section~\ref{MainLemma}. Lemma \ref{MainMainLemma} and a technical lemma relating rate-distortion functions under the expected and the probability of excess distortion criteria is needed in order to generalize Theorem 1 of \cite{paperTogether} to $\psi$-mixing sources; this technical lemma, Lemma \ref{RPLessRE}, is the subject of Section~\ref{S8}. By use of Lemmas \ref{MainMainLemma} and \ref{RPLessRE}, the main theorem of this paper, Theorem \ref{MainTheorem}, the generalization of Theorem 1 of \cite{paperTogether} to $\psi$-mixing sources, can be stated and proved and this is done in Section~\ref{S9} . Application to this theorem to Markoff and order $m$ Markoff sources is stated and proved in Section~\ref{S10}. Some discussions are carried out in Section~\ref{Discuss} where in part, it is discussed, how to generalize Theorems 2, 3 and 4 of \cite{paperTogether} to $\psi$-mixing sources. Section~\ref{S12} discusses future research directions.

\section{Notation and definitions} \label{S3}

Let $X_1, X_2, \ldots, X_n, \ldots$, be a sequence of random variables defined on a probability space $(\Omega, \Sigma, P)$. The range of each $X_i$ is assumed to be a finite set $\mathbb X$. Denote this sequence of random variables by $X$. Such a sequence is called a source. Further discussion and assumption on the source will be carried out in Section 4.


Sets will be denoted by latex mathbb notation, example, $\mathbb X, \mathbb Y$, and random variables by basic mathematical notation, for example $X, Y$. Sigma fields will be denoted by mathcal notation for example, $\mathcal S$.

The source space at each time, as stated before, is $\mathbb X$, and is assumed to be a finite set. The source reproduction space is denoted by $\mathbb Y$ which is assumed to be a finite set. Assume that $\mathbb X = \mathbb Y$.

$d: \mathbb X \times \mathbb Y \rightarrow [0, \infty )$ is the single-letter distortion measure. Assume that $d(x,x) = 0 \ \forall x \in \mathbb X$.

For $x^n \in \mathbb X^n, y^n \in \mathbb Y^n$, the $n$-letter rate-distortion measure is defined additively:
$$d^n(x^n, y^n) \triangleq \sum_{i=1}^n d(x^n(i), y^n(i))$$

where $x^n(i)$ denotes the $i^{th}$ component of $x^n$ and likewise for $y^n$.

$(X_1, X_2, \ldots, X_n)$ will be denoted by $X^n$.

A rate $R$ source-code with input space $\mathbb X$ and output space $\mathbb Y$ is a sequence $<e^n, f^n>_1^\infty$, where 
$$e^n: \mathbb X^n \rightarrow \{1, 2, \ldots, 2^{\lfloor nR  \rfloor} \}\quad \text{and}\quad f^n: \{1, 2, \ldots, 2^{\lfloor nR  \rfloor} \} \rightarrow \mathbb Y^n.$$

We say that rate $R$ is achievable for source-coding the source $X$  within distortion-level $D$ under the expected distortion criterion  if there exists a rate $R$ source code $<e^n, f^n>_1^\infty$ such that 
\begin{align} \label{EDDef}
\limsup_{n \to \infty} E \left [  \frac{1}{n} d^n(X^n, f^n(e^n(X^n))) \right ] \leq D
\end{align}

The infimum of all achievable rates under the expected distortion is an operational rate-distortion function, denoted by $R^E_X(D)$.

We say that rate $R$ is achievable for source-coding the source $X$  within distortion-level $D$ under the probability of excess distortion criterion  if there exists a rate $R$ source code $<e^n, f^n>_1^\infty$ such that 
\begin{align}\label{PDDef}
\lim_{n \to \infty} \Pr \left ( \frac{1}{n}  d^n(X^n, f^n(e^n(X^n))) > D \right ) = 0
\end{align}

The infimum of all achievable rates under the probability of excess distortion criterion  is an operational rate-distortion function, denoted by $R^P_X(D)$

We used $\limsup$ in (\ref{EDDef}) and $\lim$ in (\ref{PDDef}); in (\ref{PDDef}), we can equivalently use $\limsup$. This is because for a sequence of non-negative real numbers $a_n$, $\lim_{n \to \infty} a_n = 0$ is equivalent to $\limsup_{n \to \infty} a_n = 0$.

The block-independent approximation (henceforth shortened to BIA) $X^T$ source is a sequence of random vectors $(S_1, S_2, \ldots, S_n, \ldots)$, where $S_i$ are independent, and $\forall i$, $S_i \sim X^T$. To simplify notation, we will sometimes denote $(S_1, S_2, \ldots)$ by $S$. $S^n$ will denote $(S_1, S_2, \ldots, S_n)$. Note that BIA $X^T$ source is an i.i.d. vector source and will also be called the vector i.i.d. $X^T$ source.

The rate-distortion function for the vector i.i.d. $X^T$ source is defined in the same way as above; just that the source input space would be $\mathbb X^T$, the source output space will be $\mathbb Y^T$, the single letter distortion function would now be on $T$-length sequences and is defined additively, and when forming block-codes, we will be looking at blocks of $T$-length vectors. Details are as follows:

The source input space is $\mathbb X^T$. Denote it by $\mathbb S$. The source reproduction space is $\mathbb Y^T$. Denote it by $\mathbb T$. Denote a generic element of the source space by $s$ and that of the source reproduction space by $t$. Note that $s$ and $t$ are $T$-length sequences. Denote the $i^{th}$ component by $s(i)$ and $t(i)$ respectively.

The single letter distortion function, now, has inputs which are length $T$ vectors. It is denoted by $d_T$ and is defined additively using $d$ which has been defined before:

$d_T(s,t) \triangleq \sum_{i=1}^T d(s(i),t(i))$.

Note that $d_T$ is the same as $d^T$; just that we use superscript $T$ for $T$ length vectors, but now, we want to view a $T$-length vector as a scalar, and on this scalar, we denote the distortion measure by $d_T$.

$s^n$ will denote a block-length $n$ sequence of vectors of length $T$. Thus, $s^n(i)$, which denotes the $i^{th}$ component of $s^n$ is an element of $\mathbb K$. $s^n(i)(j)$ will denote the $j^{th}$ component of $s^n(i)$. 

The $n$-letter distortion function is defined additively using $d_T$:

For $s^n \in \mathbb S^n$, $t^n \in \mathbb T^n$,

$d_T^n(s^n,t^n) \triangleq \sum_{i=1}^nd_T(s^n(i), t^n(i))$.

When coding the vector i.i.d. $X^T$ source (for short, denoted by $S$), a rate $R$ source code is a sequence $<e^n, f^n>_1^\infty$, where $e^n: \mathbb S^n \rightarrow \{1, 2, \ldots, 2^{\lfloor nR  \rfloor} \}$ and $f^n: \{1, 2, \ldots, 2^{\lfloor nR  \rfloor} \} \rightarrow \mathbb T^n$.

We say that rate $R$ is achievable for source-coding the vector i.i.d. $X^T$ source within distortion-level $D$ under the expected distortion criterion  if there exists a rate $R$ source code $<e^n, f^n>_1^\infty$ such that 
\begin{align}
\lim_{n \to \infty} E \left [  \frac{1}{n} d_T^n(S^n, f^n(e^n(S^n))) \right ] \leq D
\end{align}

(Note that $S^n$ denotes $(S_1, S_2, \ldots, S_n)$).

The infimum of all achievable rates under the expected distortion criterion is the operational rate distortion function, denoted by $R^E_{X^T}(D)$. 

The information-theoretic rate-distortion function of the vector i.i.d. $X^T$ source is denoted and defined as
\begin{align}
R^I_{X^T}(D) \triangleq
\inf_{\mathbb T
 }
I(X^T; Y^T)
\end{align}
where  $\mathbb T$  is the set of $W: \mathbb S \rightarrow \mathbb P(\mathbb T)$ defined as
\begin{align}
\mathbb W \triangleq \left \{ W \ \left |\ \sum_{s \in \mathbb S, y \in \mathbb T}
p_{X^T}(s)W(t|s) d_T(s,t) \leq D  \right . \right \}
\end{align} 
where $p_{X^T}$ denotes the distribution corresponding to $X^T$.

Note that this is the usual definition of the information-theoretic rate-distortion function for an i.i.d. source; just that the source under consideration is vector i.i.d.

By the rate-distortion theorem, $R^E_{X^T}(D) = R^I_{X^T}(D)$.

Further, it is also known that 
\begin{align}
R^E_X(D) = \lim_{T \to \infty} \frac{1}{T} R^E_{X^T}(TD)
\end{align}

The channel is a sequence $c = <c^n>_1^\infty$ where
\begin{align}
c^n :& \mathbb X^n \rightarrow \mathbb P(\mathbb Y^n) \\
        & x^n \rightarrow c^n(\cdot|x^n)
\end{align}
When the block-length is $n$, the channel acts as $c^n(\cdot|\cdot)$; $c^n(y^n|x^n)$ is the probability that the channel output is $y^n$ given that the channel input is $x^n$. 

When the block-length is $n$, a rate $R$ deterministic channel encoder is a map $e_{ch}^{n}:\mathbb M_R^{n} \rightarrow \mathbb X^{n}$ and a rate $R$ deterministic channel decoder is a map $f_{ch}^{n}: \mathbb Y^{n} \rightarrow \hat {\mathbb M}_R^{n}$ where $\hat {\mathbb M}_R^{n} \triangleq \mathbb M_R^{n} \cup \{e\}$ is the message reproduction set where `e' denotes error. The encoder and decoder are allowed to be random in the sense that encoder-decoder is a joint probability distribution on the space of deterministic encoders and decoders. $<e_{ch}^{n}, f_{ch}^{n}>_1^\infty$ is the rate $R$ channel code.

Denote 
\begin{align}
g = <g^{n}>_1^\infty  \triangleq <e_{ch}^{n} \circ c^{n} \circ f_{ch}^{n}>_1^\infty
\end{align}
$g^n$ has input space $\mathbb M_R^{n}$ and output space $\hat {\mathbb M}_R^{n}$. Consider the set of channels
\begin{align}
\mathbb G_{\mathbb A} \triangleq \{ e \circ c \circ f \ | \ c \in \mathbb A \} 
\end{align}
$g \in \mathbb G_{\mathbb A}$ is a compound channel.  Rate $R$ is said to be reliably achievable over $g \in \mathbb G_{\mathbb A}$ if there exists a  rate $R$ channel code $<e_{ch}^{n}, f_{ch}^{n}>_1^\infty$  and a sequence $<\delta_n>_1^\infty$, $\delta_n \to 0$ as $n \to \infty$ such that 
\begin{align}
\sup_{m^{n} \in \mathbb M_R^{n}} g^{n}(\{ m^{n}\}^c|m^{n}) \leq \delta_n \ \forall c \in \mathbb A
\end{align}
Supremum of all achievable rates is the capacity of $c \in \mathbb A$. \emph{Note that this is the compound capacity, but will be referred to as just the capacity of $c \in \mathbb A$.}

The channel $c \in \mathbb A$ is said to communicate the source $X$ \emph{directly} within distortion $D$ if with input $X^{n}$ to $c^{n}$, the output is $Y^{n}$ (possibly depending on the particular $c \in \mathbb A$) such that
\begin{align} 
\Pr \left ( \frac{1}{n} d^{n}(X^{n}, Y^{n}) > D \right )  \leq \omega_n \forall c \in \mathbb A
\end{align}
for some $\omega_n \to 0$ as $n \to \infty$.

\section{Mixing condition used in this paper} \label{TheMix}

In this section, $\psi$-mixing processes are defined, properties of $\psi$-mixing processes are stated (and proved in the appendix) and intuition on $\psi$-mixing provided.

\subsection{Definition of $\psi$-mixing process}

Let $X_1, X_2, \ldots, X_n, \ldots$ be a sequence of random variables defined on a probability space $(\Omega, \Sigma, P)$. The random variables from $X_a$ to $X_b$ will be denoted by $X_a^b$, $1 \leq a \leq b \leq \infty$. The whole sequence $X_1^\infty$ will be denoted by $X^\infty$ or just by $X$. The range of each $X_i$ is assumed to be contained in a finite set $\mathbb X$. Note that time is assumed to be discrete. Note further, that it is assumed that the process is one-sided in time, that it runs from time $1$ to $\infty$, not $-\infty$ to $\infty$. The Borel sigma-field on $\mathbb X^\infty$ is defined in the standard way, and is denoted by $\mathcal F^\infty$; see Pages 1, 2 of \cite{Shields} for details.

$\mathbb X_a^b$ will denote the set corresponding to the $a^{th}$ to the $b^{th}$ coordinates of $\mathbb X^\infty$, $1 \leq a \leq b < \infty$. A sequence within these coordinates will be denoted by $x_a^b$, a random variable, by $X_a^b$. The Borel sigma-field on $\mathbb X_a^b$ is denoted by $\mathcal F_a^b$. Note that if $a$ and $b$ are finite, $\mathcal F_a^b = 2^{\mathbb X_a^b}$, the power set of $\mathbb X_a^b$.

For $\mathbb A \in \mathcal F_a^t$ and $\mathbb B \in \mathcal F_{t+\tau+1}^\infty$, we will have occasion to talk about the following probabilities:

\begin{align}
& \Pr( X_1^t \in \mathbb A)  \\
& \Pr(X_{t+\tau+1}^\infty \in \mathbb B) \nonumber \\
& \Pr(X_1^t \in \mathbb A, X_{t+\tau+1}^\infty \in \mathbb B) \nonumber
\end{align}

The intuitive meaning is clear: for example, $\Pr(X_1^t \in \mathbb A, X_{t+\tau+1}^\infty \in \mathbb B)$ refers to the probability that the random variable $X_1^t$ takes values in the set $\mathbb A$ and the random variables $X_{t+\tau+1}^\infty$ take values in the set $\mathbb B$. Mathematically, this is defined as follows. Define:
\begin{align}
& \mathbb A' = \{(a_1, a_2, \ldots, a_n, \ldots) | (a_1, a_2, \ldots, a_t) \in \mathbb A\}  \\
& \mathbb B' = \{(b_1, b_2, \ldots, b_n, \ldots) | (b_{t+\tau+1}, b_{t+\tau+2}, \ldots) \in \mathbb B\} \nonumber 
\end{align}

Then, 
\begin{align}
& \Pr(X_1^t \in \mathbb A) \triangleq P(X^\infty \in \mathbb A')  \\
& \Pr(X_{t+\tau+1}^\infty \in \mathbb B) \triangleq P(X_1^\infty \in \mathbb B') \nonumber \\
& \Pr(X_1^t \in \mathbb A, X_{t+\tau+1}^\infty \in \mathbb B) \triangleq P(X_1^\infty \in \mathbb A' \cap \mathbb B') \nonumber \\ 
\end{align}

Further, if $\Pr(X_1^t \in \mathbb A) > 0$, the following definition will be used:
\begin{align}\label{O17}
\Pr(X_{t+\tau+1}^{\infty}\in \mathbb{B}|X_1^t \in \mathbb{A}) \triangleq 
\frac{\Pr(X_1^t \in \mathbb A, X_{t+\tau+1}^\infty \in \mathbb B)}{\Pr(X_1^t \in \mathbb A)}
\end{align}

The one-sided version of $\psi$-mixing criterion of \cite{Bradley} will be used in this document, This is because the stochastic process under consideration in this document is one-sided in time, whereas the stochastic process under consideration in \cite{Bradley} is two-sided in time.

Define, for $\tau \in \mathbb W$, the set of whole numbers (non-negative integers),
\begin{multline}\label{O18}
	\psi(\tau)=\sup_{t \in \mathbb N}\sup_{\mathbb{A}\in \mathcal{F}_{1}^{t},\mathbb{B}\in \mathcal{F}_{t+\tau+1}^{\infty},\Pr(X_{1}^{t}\in \mathbb{A})>0,\Pr(X_{t+\tau+1}^{\infty}\in \mathbb{B})>0}\\
\left  | \frac{\Pr(X_1^t \in \mathbb A, X_{t+\tau+1}^{\infty}\in \mathbb{B})}{\Pr(X_1^t \in \mathbb{A})\Pr(X_{t+\tau+1}^{\infty}\in \mathbb{B})} -1 \right |
\end{multline}

The process $X$ is said to be $\psi$-mixing if $\psi(\tau) \to 0$ as $\tau \to \infty$.

The changes in (\ref{O18}) from \cite{Bradley} are:
\begin{itemize}
\item
 The first $\sup$ is taken over $t \in \mathbb Z$ in \cite{Bradley}, see Page 111 of \cite{Bradley}. Also, $t$ is denoted by $j$ in \cite{Bradley}. However, the sup in (\ref{O18}) is over $j \in \mathbb W$. This is because the process in \cite{Bradley} is two-sided in time, whereas we are considering a one-sided process. 
\item
A change of notation, where probabilities in (\ref{O18}) are written in terms of random-variables taking values in certain sets, whereas \cite{Bradley} considers the underlying probability space and writes probabilities of sets on that space, see Page 110, 111 of \cite{Bradley}. 
\item
The set $\mathbb A \in \mathcal F_1^t$ in (\ref{O18}), whereas if one used the deﬁnition in \cite{Bradley}, the set $\mathbb A$ would belong to $\mathcal F_{-\infty}^t$. This is, again, because the process in \cite{Bradley} is two-sided whereas the process in this paper is one-sided.
\end{itemize}

The reader is referred to \cite{Bradley} and 
\cite{Prohorov} 
for an overview of various kinds of mixing conditions. \cite{Bradley} gives a thorough overview of strong mixing conditions. \cite{Prohorov} mentions both weak mixing and strong mixing conditions though the coverage of strong mixing conditions is less thorough than in \cite{Bradley}.

Let $X$ be stationary. For $\mathbb B \subset \mathbb X^T$, denote the probability $P(X_{t+1}^{t+T} \in \mathbb B)$ (which is independent of $t$ since $X$ is stationary), by $P_T(\mathbb B)$. Note that $P_T$ is a probability distribution on $\mathbb X^T$ where the underlying sigma-field is the canonical sigma-field $2^{\mathbb X^T}$.

\subsection{Properties of $\psi$-mixing processes}

\begin{lemma} \label{L1}
Let $X$ be stationary, $\psi$-mixing. Then, $\forall t \in \mathbb N, \forall \tau \in \mathbb W, \forall T \in \mathbb W, \forall \mathbb A \subset \mathbb X^t, \forall \mathbb B \subset \mathbb X^T, P(X_1^t \in \mathbb A) > 0$,
\begin{align} \label{O19}
\Pr(X_{t+\tau+1}^{t+\tau+T}\in \mathbb{B}|X_1^t \in \mathbb{A}) = (1-\lambda_\tau)P_T(\mathbb B) + \lambda_\tau P'_{t,\tau,T,\mathbb A}(\mathbb B)
\end{align}
for some probability distribution $P'_{t, \tau, T, \mathbb A}$ on $\mathbb X^T$ (under the canonical sigma field on $\mathbb X^T$) which may depend on $t, \tau, T, \mathbb A$, and $\lambda_\tau \to 0$ as $\tau \to \infty$.
\end{lemma}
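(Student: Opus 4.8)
The plan is to read the $\psi$-mixing inequality as a pointwise, two-sided likelihood-ratio bound between the conditional law of the future window and the stationary marginal, and then to repackage that bound as the asserted convex combination.

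First I would note that for $\mathbb{B} \subset \mathbb{X}^T$ the event $\{X_{t+\tau+1}^{t+\tau+T} \in \mathbb{B}\}$ is a cylinder set, hence an element of $\mathcal{F}_{t+\tau+1}^\infty$, so it is an admissible ``future'' event in the definition (\ref{O18}) of $\psi(\tau)$. Write $Q(\mathbb{B}) := \Pr(X_{t+\tau+1}^{t+\tau+T} \in \mathbb{B} \mid X_1^t \in \mathbb{A})$, understood via (\ref{O17}); by stationarity the unconditional probability $\Pr(X_{t+\tau+1}^{t+\tau+T} \in \mathbb{B})$ equals $P_T(\mathbb{B})$. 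Recognizing that the ratio appearing in (\ref{O18}) is exactly $Q(\mathbb{B})/P_T(\mathbb{B})$, I obtain for every $\mathbb{B}$ with $P_T(\mathbb{B}) > 0$ that
\begin{align*}
\left| \frac{Q(\mathbb{B})}{P_T(\mathbb{B})} - 1 \right| \leq \psi(\tau), \qquad \text{so in particular} \qquad Q(\{b\}) \geq (1 - \psi(\tau))\, P_T(\{b\})
\end{align*}
for each singleton $\{b\}$ with $P_T(\{b\}) > 0$. For $b$ with $P_T(\{b\}) = 0$ the inequality in (\ref{O18}) does not apply, but then $\Pr(X_1^t \in \mathbb{A}, X_{t+\tau+1}^{t+\tau+T} = b) \leq \Pr(X_{t+\tau+1}^{t+\tau+T} = b) = 0$, so $Q(\{b\}) = 0$ and $Q$ is supported on the support of $P_T$.

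Next I would construct the mixture. Set $\lambda_\tau := \psi(\tau)$, which depends only on $\tau$ since $\psi(\tau)$ is itself a supremum over $t, \mathbb{A}, \mathbb{B}$. Assuming first $\lambda_\tau > 0$, define $P'_{t,\tau,T,\mathbb{A}}(\{b\}) := 0$ when $P_T(\{b\}) = 0$ and otherwise
\begin{align*}
P'_{t,\tau,T,\mathbb{A}}(\{b\}) := \frac{Q(\{b\}) - (1-\lambda_\tau)\, P_T(\{b\})}{\lambda_\tau}.
\end{align*}
The lower bound above makes the numerator nonnegative, so $P' \geq 0$, and summing over $b$ gives $\sum_b P'(\{b\}) = (1 - (1-\lambda_\tau))/\lambda_\tau = 1$; thus $P'$ is a probability distribution on $\mathbb{X}^T$, and rearranging its definition is precisely (\ref{O19}). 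If instead $\lambda_\tau = 0$, the bound forces $Q = P_T$, and (\ref{O19}) holds with $\lambda_\tau = 0$ and $P' := P_T$. In either case $\lambda_\tau = \psi(\tau) \to 0$ as $\tau \to \infty$ by the $\psi$-mixing hypothesis.

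The substantive step is the passage from the multiplicative statement of $\psi$-mixing to an honest probability distribution $P'$, and the only thing that makes this work is the \emph{lower} bound $Q(\{b\}) \geq (1-\psi(\tau)) P_T(\{b\})$ guaranteeing nonnegativity of $P'$; a one-sided or merely total-variation mixing notion would not deliver this. The remaining technical care -- the absolute-continuity argument handling points where $P_T$ vanishes, and the observation that $\lambda_\tau$ inherits uniformity in $t, T, \mathbb{A}$ from the supremum in (\ref{O18}) -- is routine.
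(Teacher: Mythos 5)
Your proposal is correct and follows essentially the same route as the paper: both extract the two-sided bound $|Q(\mathbb B) - P_T(\mathbb B)| \leq \lambda_\tau P_T(\mathbb B)$ from the definition of $\psi(\tau)$, use the resulting lower bound to define $P'_{t,\tau,T,\mathbb A}$ by the same normalized-difference formula, and treat the degenerate cases $\lambda_\tau = 0$ and $P_T = 0$ separately. Your version is marginally more explicit in verifying that $P'$ sums to one and in the absolute-continuity argument at atoms where $P_T$ vanishes, but the substance is identical.
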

\begin{proof}
See Appendix \ref{Append}.
\end{proof}

\begin{lemma} \label{L2}
If $X$ is stationary, $\psi$-mixing, then $X$ is ergodic.
\end{lemma}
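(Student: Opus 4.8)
The plan is to follow the standard route ``mixing implies ergodicity'': I will show that every shift-invariant event has probability $0$ or $1$ by proving that such an event $\mathbb{E}$ satisfies $P(\mathbb{E}) = P(\mathbb{E})^2$. Throughout, let $\sigma$ denote the one-sided shift on $\mathbb X^\infty$, so that stationarity of $X$ means $P$ is $\sigma$-invariant ($P\circ\sigma^{-1} = P$), and ergodicity is the assertion that every $\mathbb{E} \in \mathcal{F}^\infty$ with $\sigma^{-1}\mathbb{E} = \mathbb{E}$ satisfies $P(\mathbb{E}) \in \{0,1\}$.

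First I would fix such an invariant $\mathbb{E}$ and an arbitrary $\epsilon > 0$. Since the algebra of finite-dimensional cylinder sets generates $\mathcal{F}^\infty$, a routine approximation argument produces $t \in \mathbb{N}$ and a set $\mathbb{A} \subset \mathbb X^t$ whose cylinder extension $\mathbb{A}'$ satisfies $P(\mathbb{E} \triangle \mathbb{A}') < \epsilon$. The key observation is that $\mathbb{E}$ is fixed by the shift: for every $\tau \in \mathbb{W}$ we have $\sigma^{-(t+\tau)}\mathbb{E} = \mathbb{E}$, and by stationarity $P\bigl(\mathbb{E} \triangle \sigma^{-(t+\tau)}\mathbb{A}'\bigr) = P(\mathbb{E}\triangle\mathbb{A}') < \epsilon$. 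Thus the ``past'' cylinder $\mathbb{A}' \in \mathcal{F}_1^t$ and the ``future'' cylinder $\sigma^{-(t+\tau)}\mathbb{A}' \in \mathcal{F}_{t+\tau+1}^\infty$ both approximate $\mathbb{E}$ within $\epsilon$ in measure, no matter how large $\tau$ is.

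Next I would decouple these two events using the mixing hypothesis. Applying Lemma \ref{L1} with $T = t$ and $\mathbb{B} = \mathbb{A}$ (equivalently, reading the bound $\bigl|P(\mathbb{A}'\cap\sigma^{-(t+\tau)}\mathbb{A}') - P(\mathbb{A}')P(\sigma^{-(t+\tau)}\mathbb{A}')\bigr| \le \psi(\tau)$ directly off (\ref{O18}) when $P(\mathbb{A}') > 0$, the case $P(\mathbb{A}') = 0$ being trivial), and using stationarity to write $P(\sigma^{-(t+\tau)}\mathbb{A}') = P(\mathbb{A}')$, I obtain
\begin{align*}
P\bigl(\mathbb{A}' \cap \sigma^{-(t+\tau)}\mathbb{A}'\bigr) \longrightarrow P(\mathbb{A}')^2 \qquad \text{as } \tau \to \infty,
\end{align*}
since $\lambda_\tau \to 0$ (equivalently $\psi(\tau)\to 0$).

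Finally I would combine the two estimates. Since $\mathbb{A}'$ and $\sigma^{-(t+\tau)}\mathbb{A}'$ each differ from $\mathbb{E}$ by less than $\epsilon$ in measure, $P(\mathbb{A}'\cap\sigma^{-(t+\tau)}\mathbb{A}')$ lies within $2\epsilon$ of $P(\mathbb{E}\cap\mathbb{E}) = P(\mathbb{E})$, while $P(\mathbb{A}')^2$ lies within $2\epsilon$ of $P(\mathbb{E})^2$; on the other hand the displayed limit makes $P(\mathbb{A}'\cap\sigma^{-(t+\tau)}\mathbb{A}')$ arbitrarily close to $P(\mathbb{A}')^2$ for large $\tau$. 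Chaining these gives $|P(\mathbb{E}) - P(\mathbb{E})^2| < 5\epsilon$, and letting $\epsilon \to 0$ forces $P(\mathbb{E}) = P(\mathbb{E})^2$, hence $P(\mathbb{E}) \in \{0,1\}$. The analytic content is entirely standard; I expect the only delicate point to be the measure-theoretic bookkeeping in the cylinder approximation and the tracking of symmetric differences through $\sigma$, together with the observation that shift-invariance of $\mathbb{E}$ lets the approximating window be pushed arbitrarily far into the future so that the $\psi$-mixing decoupling applies as $\tau \to \infty$. That bookkeeping, rather than any genuine obstacle, is the main thing to get right.
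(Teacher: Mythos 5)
Your proof is correct, but it takes a genuinely different route from the paper's. You use the textbook characterization of ergodicity (every shift-invariant event is trivial), approximate an invariant set $\mathbb{E}$ in measure by a finite cylinder $\mathbb{A}'\in\mathcal F_1^t$, exploit invariance to replace $\mathbb{E}$ by the far-shifted cylinder $\sigma^{-(t+\tau)}\mathbb{A}'\in\mathcal F_{t+\tau+1}^\infty$, and then let $\psi$-mixing decouple the two to force $P(\mathbb{E})=P(\mathbb{E})^2$; the chain of estimates ($2\epsilon$ from each symmetric-difference substitution, $\psi(\tau)\to 0$ for the decoupling) all checks out, including the reduction of (\ref{O18}) to $|P(\mathbb{A}\cap\mathbb{B})-P(\mathbb{A})P(\mathbb{B})|\le\psi(\tau)$ and the use of stationarity to equate $P(\sigma^{-(t+\tau)}\mathbb{A}')$ with $P(\mathbb{A}')$. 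The paper instead never touches arbitrary invariant sets: it verifies the Ces\`aro-average independence criterion on cylinder sets quoted from \cite{Shields} (equation (\ref{O62})), by sandwiching $P(X_1^t=a_1^t, X_{\tau+1}^{\tau+T}=b_1^T)$ between $(1\mp\lambda_{\tau-t})$ times the product of marginals, summing over $\tau$, and noting that the Ces\`aro mean of $\lambda_\tau$ vanishes. Your argument buys a conceptually cleaner statement (it actually establishes full mixing-type decoupling along $\tau$, of which ergodicity is a consequence) at the cost of the measure-theoretic approximation step for general invariant sets; the paper's argument stays entirely within finite cylinders and elementary summation, at the cost of invoking an external sufficient condition for ergodicity and only using the (weaker) Ces\`aro form of the decay. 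Both are complete proofs.
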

\begin{proof}
See Appendix \ref{Append}.
\end{proof}

\begin{lemma}\label{L3}
Let $X = (X_1, X_2, \ldots, X_n, \ldots)$ be a stationary, irreducible, aperiodic Markoff chain evolving on a finite set $\mathbb X$. Then, $X$ is $\psi$-mixing.
\end{lemma}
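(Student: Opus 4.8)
The plan is to reduce the $\psi$-mixing coefficient $\psi(\tau)$ to the rate at which the $k$-step transition probabilities of the chain converge to the stationary distribution, and then to invoke the Perron--Frobenius theorem to control that rate. Write $\pi$ for the unique stationary distribution and $p^{(k)}(i,j)$ for the $k$-step transition probabilities. Irreducibility on a finite state space forces $\pi(j)>0$ for every $j\in\mathbb X$, and irreducibility together with aperiodicity makes some power of the transition matrix strictly positive; the Perron--Frobenius theorem then furnishes a spectral gap, so that
\begin{align}\label{PFconv}
\delta(k)\triangleq \max_{i,j\in\mathbb X}\left| \frac{p^{(k)}(i,j)}{\pi(j)}-1\right| \longrightarrow 0 \quad\text{as } k\to\infty .
\end{align}
(In fact $\delta(k)$ decays geometrically, but only $\delta(k)\to 0$ is needed.) Establishing \eqref{PFconv}, in particular its uniformity over all pairs $(i,j)$, is the one genuinely quantitative input, and I would cite it as a standard fact about finite irreducible aperiodic chains.

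The conceptually central step is to show that, for arbitrary $t$, arbitrary past event $\mathbb A\in\mathcal F_1^t$, and arbitrary future event $\mathbb B\in\mathcal F_{t+\tau+1}^\infty$, the dependence between $\{X_1^t\in\mathbb A\}$ and $\{X_{t+\tau+1}^\infty\in\mathbb B\}$ factors entirely through the single pair of states $(X_t,X_{t+\tau+1})$. Setting
$$a_i\triangleq\Pr(X_1^t\in\mathbb A,\ X_t=i),\qquad b_j\triangleq\Pr(X_{t+\tau+1}^\infty\in\mathbb B\mid X_{t+\tau+1}=j),$$
the Markov property, applied once at time $t$ to detach the past and once at time $t+\tau+1$ to detach the future, gives
\begin{align}\label{factorization}
\Pr(X_1^t\in\mathbb A,\ X_{t+\tau+1}^\infty\in\mathbb B)=\sum_{i,j}a_i\,p^{(\tau+1)}(i,j)\,b_j ,
\end{align}
while stationarity gives $\Pr(X_1^t\in\mathbb A)=\sum_i a_i$ and $\Pr(X_{t+\tau+1}^\infty\in\mathbb B)=\sum_j\pi(j)b_j$. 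The quantity $b_j$ depends only on $j$ and $\mathbb B$ by the Markov property and stationarity, so no difficulty arises from $\mathbb B$ involving infinitely many future coordinates.

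Finally I would combine the two steps. Substituting $p^{(\tau+1)}(i,j)=\pi(j)\bigl(1+\varepsilon_{ij}\bigr)$ with $|\varepsilon_{ij}|\le\delta(\tau+1)$ into \eqref{factorization}, the leading term reproduces exactly the product $\bigl(\sum_i a_i\bigr)\bigl(\sum_j\pi(j)b_j\bigr)$, and the remainder is bounded in absolute value by $\delta(\tau+1)\bigl(\sum_i a_i\bigr)\bigl(\sum_j\pi(j)b_j\bigr)$, using $a_i,\pi(j),b_j\ge 0$. Dividing through yields
$$\left|\frac{\Pr(X_1^t\in\mathbb A,\ X_{t+\tau+1}^\infty\in\mathbb B)}{\Pr(X_1^t\in\mathbb A)\Pr(X_{t+\tau+1}^\infty\in\mathbb B)}-1\right|\le\delta(\tau+1),$$
a bound independent of $t$, $\mathbb A$, and $\mathbb B$. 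Taking the supremum in the definition \eqref{O18} gives $\psi(\tau)\le\delta(\tau+1)\to 0$, which is precisely the $\psi$-mixing property. I expect the only real obstacle to be verifying the factorization \eqref{factorization} in the uniform-over-$(t,\mathbb A,\mathbb B)$ form required; once that is in hand, the remaining estimates follow routinely from nonnegativity and the uniform bound \eqref{PFconv}.
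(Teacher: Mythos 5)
Your proof is correct, but it takes a genuinely different route from the paper. The paper does not argue from first principles at all: it constructs the two-sided stationary extension $V=(V_t,\,t\in\mathbb Z)$ of the chain, invokes Theorem 3.1 of Bradley's survey (a finite-state, stationary, irreducible, aperiodic two-sided Markov chain is $\psi$-mixing), and then shows $\psi_X(\tau)\le\psi_V(\tau)$ by mapping each one-sided event $\mathbb A\in\mathcal F_1^t$ to a cylinder $\mathbb A''\in\mathcal H_{-\infty}^t$ with the same probability, so that the one-sided supremum is taken over a subfamily of the two-sided one. You instead prove the statement directly: you factor $\Pr(X_1^t\in\mathbb A,\,X_{t+\tau+1}^\infty\in\mathbb B)$ through the pair $(X_t,X_{t+\tau+1})$ via the Markov property, write $p^{(\tau+1)}(i,j)=\pi(j)(1+\varepsilon_{ij})$, and bound the error term by the Perron--Frobenius quantity $\delta(\tau+1)=\max_{i,j}|p^{(\tau+1)}(i,j)/\pi(j)-1|$, using that $\pi(j)>0$ for all $j$ on a finite irreducible chain. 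What the paper's route buys is brevity and consistency with its general strategy of importing two-sided results from \cite{Bradley} (the same device is reused for Lemma \ref{L5}); what your route buys is self-containedness, the explicit quantitative bound $\psi(\tau)\le\delta(\tau+1)$ with geometric decay, and no need for the two-sided extension or an external mixing theorem. The only points you should make explicit in a polished write-up are (i) that the Markov property extends from cylinder events to arbitrary $\mathbb B\in\mathcal F_{t+\tau+1}^\infty$ by a monotone class argument, so that $b_j=\Pr(X_{t+\tau+1}^\infty\in\mathbb B\mid X_{t+\tau+1}=j)$ is well defined and the factorization holds for all such $\mathbb B$, and (ii) that the division at the end is legitimate because the supremum in \eqref{O18} is restricted to events of positive probability; neither is a gap, just a detail to record.
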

\begin{proof}
See Appendix \ref{Append}.
\end{proof}

Lemmas \ref{L2} and \ref{L3} have been proved in \cite{Bradley} for two-sided $\psi$-mixing processes. The proof of Lemma 3 uses the result from \cite{Bradley} on two-sided processes.

\begin{lemma} \label{L4}
Let $X = (X_1, X_2, \ldots)$ be a stationary, $\psi$-mixing process evolving on a set $\mathbb X$. For $L \in \mathbb N$, define $Z_t = X_{(t-1)L + 1}^{tL}$. Then, $Z=(Z_1, Z_2, \ldots)$ is a stationary, $\psi$-mixing process evolving on the set $\mathbb X^L$.
\end{lemma}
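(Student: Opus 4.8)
The plan is to show that the $\psi$-mixing coefficient of the blocked process $Z$ at lag $\sigma$ is dominated by the $\psi$-mixing coefficient of $X$ at the induced lag $\sigma L$, from which $\psi$-mixing of $Z$ follows at once. Stationarity is the easy part and I would dispatch it first: since $X$ is stationary its joint law is invariant under a shift by any multiple $tL$ of $L$, and because $(Z_{t+1}, Z_{t+2}, \ldots)$ is obtained from $(Z_1, Z_2, \ldots)$ precisely by shifting $X$ by $tL$, the finite-dimensional distributions of $Z$ are shift-invariant; hence $Z$ is stationary on $\mathbb X^L$.

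For the mixing claim, the key observation is that each coordinate $Z_t = X_{(t-1)L+1}^{tL}$ is a deterministic (and, via concatenation, bijective) function of an $L$-block of $X$. Consequently the sigma-field generated by $(Z_1, \ldots, Z_s)$ coincides, under the canonical identification of $(\mathbb X^L)^s$ with $\mathbb X^{sL}$, with $\mathcal F_1^{sL}$; likewise the tail sigma-field generated by $(Z_{s+\sigma+1}, Z_{s+\sigma+2}, \ldots)$ is identified with the tail sigma-field generated by $(X_{(s+\sigma)L+1}, X_{(s+\sigma)L+2}, \ldots)$. Thus any event $\mathbb A$ for the $Z$-process on its first $s$ coordinates is an event $\tilde{\mathbb A} \in \mathcal F_1^{sL}$ for $X$, any tail event $\mathbb B$ for $Z$ beyond coordinate $s+\sigma$ is an event $\tilde{\mathbb B} \in \mathcal F_{(s+\sigma)L+1}^{\infty}$ for $X$, and the corresponding marginal and joint probabilities are literally equal.

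Next I would compute the induced gap. Writing the $X$-process indices as $t = sL$ and $t+\tau+1 = (s+\sigma)L + 1$, the separation seen by $X$ is $\tau = \sigma L$. Since the $\psi$-mixing coefficient of $X$ is defined as a supremum over all $t \in \mathbb N$ and all admissible events, the ratio appearing in the definition of $\psi_Z(\sigma)$ for any particular admissible pair $(\mathbb A, \mathbb B)$ equals the same ratio for $(\tilde{\mathbb A}, \tilde{\mathbb B})$ at $X$-lag $\sigma L$, and is therefore bounded by $\psi_X(\sigma L)$. The positivity conditions $\Pr(\cdot) > 0$ transfer verbatim under the identification, so taking the supremum over $s$ and over admissible $(\mathbb A, \mathbb B)$ yields $\psi_Z(\sigma) \leq \psi_X(\sigma L)$.

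Finally, since $X$ is $\psi$-mixing we have $\psi_X(\tau) \to 0$ as $\tau \to \infty$; as $\sigma \to \infty$ we also have $\sigma L \to \infty$, so $\psi_Z(\sigma) \leq \psi_X(\sigma L) \to 0$, establishing that $Z$ is $\psi$-mixing. I do not anticipate a genuine obstacle here: the only steps requiring care are the bookkeeping that identifies the $Z$-sigma-fields with the appropriate sub-sigma-fields of $X$ under concatenation, and the arithmetic $\tau = \sigma L$ for the induced gap.
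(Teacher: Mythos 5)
Your proof is correct and follows essentially the same route as the paper's: identify events on $(Z_1,\ldots,Z_s)$ and on the $Z$-tail with the corresponding $X$-events in $\mathcal F_1^{sL}$ and $\mathcal F_{(s+\sigma)L+1}^{\infty}$, observe that the relevant ratios are literally equal, and conclude that the supremum defining $\psi_Z$ is taken over a subcollection of that defining $\psi_X$. The only (harmless) difference is that you track the induced gap exactly and obtain the slightly sharper bound $\psi_Z(\sigma)\leq\psi_X(\sigma L)$, whereas the paper settles for $\psi_Z(\tau)\leq\psi_X(\tau)$; both give the conclusion immediately.
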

\begin{proof}
See Appendix \ref{Append}.
\end{proof}

\begin{lemma} \label{L5}
Let $X$ be a stationary, order $m$ Markoff chain evolving on a finite set $\mathbb X$. Define $Z_t = X_{(t-1)L+1}^{tL}$. Note that $Z=(Z_1, Z_2, \ldots)$ is a Markoff chain evolving on the set $\mathbb Z = \mathbb X^L$. Assume that $Z$ is irreducible, aperiodic. Then $X$ is $\psi$-mixing.
\end{lemma}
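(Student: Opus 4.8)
The plan is to realize $Z$ as a stationary, irreducible, aperiodic first-order Markoff chain, invoke Lemma \ref{L3} to conclude that $Z$ is $\psi$-mixing, and then ``de-block'' --- that is, transfer the $\psi$-mixing of the blocked process $Z$ back to the original process $X$. This last step is a converse to Lemma \ref{L4} and is where essentially all the work lies.

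First I would verify the hypotheses needed to apply Lemma \ref{L3} to $Z$. Since $X$ is stationary and $Z_t = X_{(t-1)L+1}^{tL}$ is a time-invariant block transform (advancing the $Z$-index by one advances the $X$-index by $L$, which preserves the law of $X$), the process $Z$ is stationary. Provided $L \geq m$, the last $m$ symbols needed to propagate the order-$m$ dependence of $X$ across a block boundary lie inside the preceding block, so $Z$ is a first-order Markoff chain on $\mathbb X^L$, exactly as asserted in the statement; it is irreducible and aperiodic by hypothesis. Lemma \ref{L3} then yields $\psi_Z(g) \to 0$ as $g \to \infty$, where $\psi_Z$ denotes the $\psi$-mixing coefficient of $Z$.

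For the de-blocking step, fix $\tau$ and a pair $\mathbb A \in \mathcal F_1^t$, $\mathbb B \in \mathcal F_{t+\tau+1}^\infty$ for the process $X$, with positive probabilities. Let $a = \lceil t/L \rceil$ be the last $Z$-block touched by the past event and $b = \lceil (t+\tau+1)/L \rceil$ the first $Z$-block touched by the future event. Because $Z_1, \ldots, Z_a$ determine $X_1, \ldots, X_{aL}$, which include $X_1^t$, the event $\{X_1^t \in \mathbb A\}$ is measurable with respect to $\sigma(Z_1, \ldots, Z_a)$ and equals $\{Z_1^a \in \mathbb A'\}$ for a suitable $\mathbb A'$ in the $Z$-past $\mathcal F_1^a$; likewise, since $Z_b$ determines $X_{(b-1)L+1}, \ldots, X_{bL}$, which include $X_{t+\tau+1}, \ldots, X_{bL}$, the event $\{X_{t+\tau+1}^\infty \in \mathbb B\}$ is measurable with respect to $\sigma(Z_b, Z_{b+1}, \ldots)$ and equals $\{Z_b^\infty \in \mathbb B'\}$. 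A partial dependence on the boundary block $Z_b$ is harmless, as such events still lie in the $Z$-future $\mathcal F_b^\infty$. These rewritings leave all three probabilities (the two marginals and the joint) unchanged, so the $X$-ratio appearing in the defining supremum equals the corresponding $Z$-ratio for events separated by the block-gap $g = b - a - 1$, and is therefore within $\psi_Z(g)$ of $1$.

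It remains to control $g$ uniformly in $t$ and to take the supremum. From $\lceil (t+\tau+1)/L \rceil \geq \lceil t/L \rceil + \lfloor (\tau+1)/L \rfloor$ one gets $g = b - a - 1 \geq \lfloor (\tau+1)/L \rfloor - 1 =: g_0(\tau)$, which is nonnegative once $\tau \geq 2L-1$ and tends to $\infty$ with $\tau$. Since the $\psi$-mixing coefficient is non-increasing in its argument --- any pair of events with gap $g+1$ is also a pair with gap $g$ after shifting the separating boundary by one coordinate --- we have $\psi_Z(g) \leq \psi_Z(g_0(\tau))$ for every admissible $t$. Taking the supremum over $t, \mathbb A, \mathbb B$ gives $\psi_X(\tau) \leq \psi_Z(g_0(\tau))$; letting $\tau \to \infty$ forces $g_0(\tau) \to \infty$ and hence $\psi_X(\tau) \to 0$, so $X$ is $\psi$-mixing. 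The main obstacle is precisely this de-blocking argument: handling the misalignment between the times $t, t+\tau+1$ and the block boundaries, verifying that the partial-block boundary events remain measurable in the correct $Z$-sigma-fields, and confirming, via monotonicity of $\psi_Z$, that the induced block-gap $g_0(\tau)$ diverges.
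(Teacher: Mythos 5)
Your proposal is correct and follows essentially the same route as the paper: establish that $Z$ is a stationary, irreducible, aperiodic finite-state Markoff chain, invoke Lemma \ref{L3} to get $\psi$-mixing of $Z$, and then de-block by lifting the events $\mathbb A$, $\mathbb B$ to events in the $Z$-sigma-fields and bounding $\psi_X(\tau)$ by the $\psi$-coefficient of $Z$ evaluated at a block-gap that diverges with $\tau$. The only cosmetic differences are your floor/ceiling conventions for the boundary blocks and your explicit appeal to monotonicity of $\psi_Z$, where the paper instead works with the intermediate quantity $\mu_Z(q,q')$ and takes a supremum over $t$.
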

\begin{proof}
See Appendix \ref{Append}.
\end{proof}

It should be noted here, that a $\psi$-mixing process can have a rate of mixing as slow as is desired whereas a Markoff $\psi$-mixing chain implies exponential rate of convergence to the stationary distribution \cite{Bradley2}, \cite{Kesten}. Thus, the set of $\psi$-mixing processes is strictly larger than the set of Markoff or order $m$ Markoff chains. 

These lemmas are the same as the lemmas in \cite{Bradley} but for $1$-sided $\psi$-mixing processes, not $2$-sided $\psi$-mixing processes. Many of the proofs use the result from \cite{Bradley} for $2$-sided $\psi$-mixing processes and via a suitable construction, prove the same for $1$-sided $\psi$-mixing processes.

\subsection{Intuition on $\psi$-mixing}

Assume that $X$ is stationary. Note (\ref{O55}). $X_1^t$ and $X_{t+\tau+1}^\infty$ are independent if 
\begin{align}\label{O20}
P(X_{t+\tau+1}^\infty \in \mathbb B | X_1^t \in \mathbb A) = P(X_{t+\tau+1}^\infty \in \mathbb B) 
= P_T(\mathbb B)
\end{align}
Thus, (\ref{O55}) says that the process `becomes more and more independent' with time, further, this happens at a rate proportional to a factor $\lambda_\tau \to 0$ as $\tau \to \infty$ which is independent of the sets $\mathbb A$ and $\mathbb B$ in question, and also a multiplicative factor which depends on the probability of the set $\mathbb B$. This dependence on the probability of $\mathbb B$ is intuitively pleasing in the sense that, for example, if $P_T(\mathbb B) = 10^{-10}$ and $\lambda_{\tau} = 10^{-5}$, then without the multiplicative factor $P_T(\mathbb B)$, it says nothing meaningful; however, with the multiplicative factor $P_T(\mathbb B)$, it says something meaningful. A mixing condition can indeed be defined where $P_T(\mathbb B)$ does not exist on the right hand side in (\ref{O20}), this is the $\phi$-mixing criterion in \cite{Bradley}. An even weaker condition is the $\alpha$-mixing condition \cite{Bradley} where independence is measured in the sense of 
\begin{align}
P(\mathbb A \cap \mathbb B) = P(\mathbb A) P(\mathbb B)
\end{align}
instead of 
\begin{align}
P(\mathbb B|\mathbb A) = P(\mathbb B)
\end{align}

The $\phi$-mixing criterion has been used in the source coding literature, see for example \cite{7} and \cite{8}. In \cite{7}, it is proved that if a certain version of the goldwashing algorithm is applied to encode a stationary, $\phi$-mixing source, the expected distortion performance converges to the distortion-rate function of the source as the codebook length goes to $\infty$. In \cite{8}, it is proved that for sources which are $\phi$-mixing and have summable mixing coefficients, the redundancy of the fixed-database Lempel-Ziv algorithm with database size $n$ is lower bounded by a certain function of $n$ as described in \cite{8}.

\section{Idea of the proof} \label{S5}

Theorem 1 of \cite{paperTogether} will be generalized to $\psi$-mixing sources in this paper. This will be done by reducing the problem to the case when the source is i.i.d., and then, use Theorem 1 of \cite{paperTogether}.

The basic idea of the proof is the following: Choose $\tau, T$, where $\tau$ is ‘small’ compared to $T$. Denote $K_1 = X_1^T,  K_2 = X_{T+\tau+1}^{2T+\tau}, K_3 = X_{2T+2\tau+1}^{3T+2\tau}, \ldots$. Each $K_i$ has the same distribution; denote it by $K$. By Lemma 1, each $K_i$ has distribution close to $P_T$ in the sense of (\ref{O19}). Thus, $K_1, K_2, K_3, \ldots$,  is ‘close to’ an i.i.d. process. Theorem 1 from \cite{paperTogether} can be used and rates approximately
\begin{align}
\frac{T}{T+\tau} \frac{1}{T} R^E_K(TD)
\end{align}
are achievable for communication over a channel which is known to communicate the source $X$ to within a distortion $D$. Take $T \to \infty$ and it follows that rates $< R^E_X(D)$ are achievable, where $X$ is the $\psi$-mixing source. Finally, since the description of the channel is in terms of a probability of excess distortion criterion, we will prove that $R^P_X(D) \leq R^E_X(D)$ and this will prove that if a certain rate $R^E_X(\cdot)$ is achievable for the channel-coding problem, then so is the rate $R^P_X(D)$. 

A lot of technical steps are needed and this will be the material of the future sections. Note also, that there are various definitions of mixing in the literature which will make $K_1, K_2, \ldots$,  ‘almost’ independent, but the proof will not work for all these definitions. The definition of $\psi$-mixing is used primarily because (\ref{O19}) holds and this can be used to simulate the source $X$ in a way discussed in the next section, and this simulation procedure will be a crucial element of the proof.

\section{A simulation procedure for the stationary source $X$ which satisfies $\psi$-mixing} \label{S6}

By using Lemma 1, a procedure to simulate the source $X = (X_t, t = 1, 2, \ldots)$ will be described.

Fix $T$ and $\tau$, both strictly positive integers. Denote $n = (T+\tau)k$ for some strictly positive integer $k$.

We will generate a $(X'_1, X'_2, \ldots, X'_{(T+\tau)k})$, as described below.

First divide time into chunks of time $T$, $\tau$, $T$, $\tau$, $T$, $\tau$, and so on $\ldots$

Call these slots $A_1, B_1$, $A_2, B_2$, $\ldots$,  $A_i, B_i$, $\ldots$, $A_k, B_k$.

Thus,

$A_1$ contains ${X'}_1^T$.

$B_1$ contains ${X'}_{T+1}^{T+\tau}$.

$A_2$ contains ${X'}_{T+\tau+1}^{2T+\tau}$.

$B_2$ contains ${X'}_{2T+\tau+1}^{2T+2\tau}$.

$\vdots \ \ \ \vdots \ \ \ \vdots \ \ \ \vdots \ \ \ \vdots \ \ \ \vdots \ \ \ \vdots \ \ \ \vdots$

$A_i$ contains ${X'}_{(i-1)(T + \tau)+1}^{iT + (i-1)\tau}$.

$B_i$ contains ${X'}_{iT + (i-1)\tau+1}^{i(T+\tau)}$.

$\vdots \ \ \ \vdots \ \ \ \vdots \ \ \ \vdots \ \ \ \vdots \ \ \ \vdots \ \ \ \vdots \ \ \ \vdots$

$A_k$ contains ${X'}_{(k-1)(T + \tau)+1}^{kT + (k-1)\tau}$.

$B_k$ contains ${X'}_{kT + (k-1)\tau+1}^{k(T+\tau)}$.

Let $C_1 = 1$.

Generate $C_2, C_3, \ldots, C_k$ i.i.d., where $C_i$ is $1$ with probability $(1 - \lambda_{\tau})$ and $0$ with probability $\lambda_{\tau}$.

If $C_i = 1$, denote $A_i$ by $A_i^{(g)}$and if $C_i = 0$, denote $A_i$ by $A_i^{(b)}$. Think of superscript `g' as `good' and `b' as `bad'.

Generation of $(X'_1, X'_2, \ldots, X'_{(T+\tau)k})$ is carried out as follows: 

The order in which the $X'_i$s in the slots will be generated is the following: 

$A_1, A_2, B_1, A_3, B_2, \ldots, A_i, B_{i-1}, A_{i+1}, \ldots$.

Generate ${X'}_1^T$ (slot $A_1^{(g)}$) by the distribution $P_T$.

Assume that all $X_i$ have been generated until slot $A_{i-1}$, in other words, the generation in the following slots in the following order has happened: 

$A_1, A_2, B_1, A_3, B_2, \ldots, A_{i-1}, B_{i-2}$. 

The next two slots to be generated, as per the order stated above, is $A_i$ and then $B_{i-1}$. 

For slot $A_i$,

If it is a `g' slot, generate ${X'}_{(i-1)(T + \tau)+1}^{iT + (i-1)\tau}$ using $P_T$.

If it is a `b' slot, if $P(X_1^{(k-1)T+(k-2)\tau} = {x'}_1^{(k-1)T + (k-2)\tau}) > 0$, 
 generate ${X'}_{(i-1)(T + \tau)+1}^{iT + (i-1)\tau}$ using $P'_{t, \tau, T, \mathbb A}$ with $t = (k-1)T + (k-2)\tau$ and $\mathbb A =\linebreak \{{x'}_1^{(k-1)T + (k-2)\tau} \}$
where ${x'}_1^{(k-1)T + (k-2)\tau}$ is the simulated process realization so far. If $P(X_1^{(k-1)T+(k-2)\tau} = {x'}_1^{(k-1)T + (k-2)\tau}) > 0$, no process generation needs to be carried out anyway.

During the slot $B_{i-1}$, ${X'}_{(i-1)T + (i-2)\tau+1}^{(i-1)(T+\tau)}$ is generated using the probability measure $P$ of the stationary process given the values of the process already generated, that is, given ${x'}_{1}^{(k-1)T + (k-2)\tau}$ and ${x'}_{(i-1)(T + \tau)+1}^{iT + (i-1)\tau}$.

This finishes the description of the generation of the $(X'_1, X'_2, \ldots,\linebreak X'_{(T+\tau)k})$ sequence. 

Note that by Lemma 1 and the way the above simulation has been carried out,  $(X'_1, X'_2, \ldots, X'_{(T+\tau)k}) \sim (X_1, X_2, \ldots, X_{(T+\tau)k})$.

Note also, that during slots $A_i^{(g)}$, the source has distribution $X^T$ and is independent over these slots. This fact is of importance in the next section.

\section{The main lemma:  channel-coding theorem} \label{MainLemma}

\begin{lemma}\label{MainMainLemma}

Let $c = <c^n>_1^\infty$ directly communicate the source $X$, assumed to be $\psi$-mixing, within distortion $D$.

Let $\lambda > 0$ (think of $\lambda$ small; $\lambda << 1$). Choose $\beta > 0$ (think of $\beta$ small; $\beta << 1 - \lambda$). Choose $\tau$ large enough so that $\lambda_{\tau} \leq \lambda$. Then, rates
\begin{align}
R < \frac{1-\lambda_{\tau}-\beta}{T+\tau}R^E_{X^T} \left ( \frac{(T + \tau)D}{1 - \lambda_{\tau} - \beta} \right )
\end{align}
are reliably achievable over $c$ $\forall T \geq 1$ (think of $T$ large).
\end{lemma}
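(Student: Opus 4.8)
The plan is to reduce the claim to a statement about the \emph{i.i.d.} vector source $X^T$ and then invoke Theorem~1 of \cite{paperTogether}, whose hypotheses allow an arbitrary compound channel that directly communicates an i.i.d. source. The bridge between the $\psi$-mixing source $X$ and the i.i.d. source $X^T$ is precisely the simulation of Section~\ref{S6}. First I would fix $T,\tau$ and set $n=(T+\tau)k$, and realize the channel input by that simulation: generate the indicators $C_1=1,C_2,\dots,C_k$ i.i.d.\ Bernoulli$(1-\lambda_\tau)$ as common randomness shared by encoder and decoder (permissible since random codes are allowed), place i.i.d.\ $P_T=X^T$ content in the good slots $A_i^{(g)}$, and fill the remaining $A$-slots and all gap slots $B_i$ exactly as prescribed so that, by Lemma~\ref{L1}, the generated block $X'^n$ has the same law as $X^n$. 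Since $X'^n\sim X^n$ and $c$ directly communicates $X$ within $D$, feeding $X'^n$ into $c^n$ yields an output $Y'^n$ with $\Pr(\tfrac1n d^n(X'^n,Y'^n)>D)\le\omega_n$.

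Next I would define the \emph{effective vector channel} whose $i$-th input is the $T$-vector occupying the good slot $A_i^{(g)}$ (i.i.d.\ $\sim X^T$ by construction) and whose $i$-th output is the $T$-block of $Y'^n$ read off that slot, with the gaps and bad slots treated as internal filler. Because $d^n$ is additive and non-negative, discarding the contributions of the bad slots and gaps gives $\sum_{i\ \mathrm{good}} d_T(\cdot)\le d^n(X'^n,Y'^n)\le nD$ on the above high-probability event. A Chernoff/law-of-large-numbers bound on $\sum_i C_i$ shows that the number of good slots is at least $(1-\lambda_\tau-\beta)k$ with probability tending to $1$ as $k\to\infty$; here the strictly positive slack $\beta$ is exactly what makes the concentration usable, since $E[\sum_i C_i]/k\to 1-\lambda_\tau>1-\lambda_\tau-\beta$. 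On the intersection of these two high-probability events the average $d_T$-distortion per good slot is at most $\tfrac{nD}{(1-\lambda_\tau-\beta)k}=\tfrac{(T+\tau)D}{1-\lambda_\tau-\beta}$, so the effective vector channel directly communicates the i.i.d.\ source $X^T$ within distortion $\tfrac{(T+\tau)D}{1-\lambda_\tau-\beta}$ in the probability-of-excess-distortion sense.

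Then I would apply Theorem~1 of \cite{paperTogether} to the i.i.d.\ source $X^T$ over this effective vector channel. Crucially, that theorem is stated for general compound channels in the Verdu--Han sense, so it is irrelevant that the effective channel is highly non-memoryless (its good-slot outputs are entangled through $c$'s joint action and through the filler). It yields reliable communication at every rate $R_{\mathrm{vec}}<R^E_{X^T}\!\big(\tfrac{(T+\tau)D}{1-\lambda_\tau-\beta}\big)$ measured per good-slot use. For the rate conversion I would design the vector code for the guaranteed blocklength $N=\lfloor(1-\lambda_\tau-\beta)k\rfloor$ and, whenever at least $N$ good slots occur, use the first $N$ of them; this conveys $\lfloor N R_{\mathrm{vec}}\rfloor$ bits over $n=(T+\tau)k$ uses of $c$, so the rate per use of $c$ is $\tfrac{\lfloor N R_{\mathrm{vec}}\rfloor}{(T+\tau)k}\to\tfrac{1-\lambda_\tau-\beta}{T+\tau}R_{\mathrm{vec}}$. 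Letting $R_{\mathrm{vec}}\uparrow R^E_{X^T}(\cdot)$ and $k\to\infty$ establishes reliability of every $R$ strictly below the stated bound.

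The hard part will be making the effective vector channel a legitimate object for Theorem~1 despite the \emph{randomness of the good-slot pattern}: its positions and count are random, so the induced channel is defined only after conditioning on $(C_i)$, and the vector code a priori sees a random blocklength. I would handle this by conditioning on the shared realization of $(C_i)$, restricting to the guaranteed $N$ good slots, and absorbing into a single vanishing error term both the event that fewer than $(1-\lambda_\tau-\beta)k$ good slots occur and the event that the total distortion exceeds $nD$; all of these are $o(1)$ as $k\to\infty$. Some care is also needed to confirm that charging the entire block distortion $nD$ to the good slots alone produces exactly the overhead factor $\tfrac{T+\tau}{1-\lambda_\tau-\beta}$ appearing in the distortion argument of $R^E_{X^T}$, and that the identity $R^E_{X^T}=R^I_{X^T}$ is what lets Theorem~1's conclusion be phrased in terms of $R^E_{X^T}$.
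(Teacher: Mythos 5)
Your proposal is correct and follows essentially the same route as the paper's own proof: simulate $X^n$ via the good/bad slot construction with the $C_i$ as shared randomness, restrict to the first $\lfloor(1-\lambda_\tau-\beta)k\rfloor+O(1)$ good slots (declaring error if too few occur), charge the entire additive distortion budget $nD$ to those slots to get the per-slot distortion $\tfrac{(T+\tau)D}{1-\lambda_\tau-\beta}$, and invoke Theorem~1 of Part~1 for the i.i.d.\ vector source $X^T$ over the induced slot-channel before converting the rate by the factor $\tfrac{1-\lambda_\tau-\beta}{T+\tau}$. The extra care you flag about conditioning on the $(C_i)$ realization and the random slot pattern is handled in the paper in exactly the way you propose.
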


\begin{proof}

Choose $T \geq1$.

Let $n = (T+\tau)k$ for some large $k$. $n$ is the block-length.

Generate $C_1, C_2, \ldots$ as described previously.

Generate $2^{\lfloor nR \rfloor}$ codewords of block-length $(T+\tau)k = n$ by use of the simulation procedure described previously. Note that $C_1, C_2, \ldots$ is the same for generating all the $2^{\lfloor nR \rfloor}$ codewords.

Note that over $A_i^{(g)}$ time slots, the codewords are generated i.i.d., as in Shannon's random-coding argument; this generation during $A_i^{(g)}$ is done i.i.d. $X^T$.

Recall the behavior of the channel which directly communicates the source $X$  within distortion $D$. End-to-end,
\begin{align}
\lim_{n \to \infty} \Pr \left ( \frac{1}{n} d^n(X^n, Y^n) > D \right ) = 0
\end{align}
Let us look at the behavior of the channel restricted to time slots $A_i^{(g)}$.

Assume that the fraction of `g' slots among the $k$ $A_i$ slots is $\geq 1 - \lambda_{\tau} - \beta$. That is, number of $A_i^{(g)}$ slots is larger than or equal to  $\lfloor (1 - \lambda_{\tau} - \beta)k \rfloor + 1$. Denote $N = \lfloor (1 - \lambda_{\tau} - \beta)k \rfloor + 1$. This is a high probability event and the probability $\to 1$ as $k \to \infty$ for any $\beta$. If this even does not happen, we will declare decoding error; hence, in what follows, assume that this is the case.

Restrict attention to the first $N$  $A_i^{(g)}$ slots. Rename these slots $G_1$, $G_2$, $\ldots$, $G_N$.

Denote the part of the source during slot $G_i$ by $S_i$. Note that $S_i$ is a $T$-length vector.

Denote $S = (S_1, S_2, \ldots, S_N$).

Denote the channel output during slot $G_i$ by $T_i$. Note that $T_i$ is a $T$-length vector. Denote $T = (T_1, T_2, \ldots, T_N)$.

Recall the definition of the distortion function $d_T$ for $T$-length vectors, and its $n$-block additive extension.

Over $G_i$ slots, then, 
\begin{align}
\lim_{N \to \infty}
\Pr \left ( \frac{1}{N} \sum_{i=1}^N
d_T(S_i, T_i) 
> \frac{(T+\tau)kD}{N}
\right ) 
= 0
\end{align}

By substituting $N = \lfloor (1 - \lambda_{\tau} - \beta)k \rfloor + 1$, it follows, after noting that
\begin{align}
\frac{k}{\lfloor (1 - \lambda_{\tau} - \beta)k \rfloor + 1} \leq \frac{1}{1 - \lambda_{\tau} - \beta}
\end{align} 
that
\begin{align}
\lim_{k \to \infty}
\Pr \left ( 
\frac{1}{\lfloor (1 - \lambda_{\tau} - \beta)k \rfloor + 1}
\sum_{i=1}^{\lfloor (1 - \lambda_{\tau} - \beta)k \rfloor + 1}
d_T(S_i, T_i) 
> 
\frac{(T+\tau)D}{1-\lambda_{\tau}-\beta}
\right )
= 0
\end{align}

Recall again that $S_i$ are i.i.d. $X^T$ and that, codeword generation over $G_i$ slots is i.i.d. 

We have reduced, then, the problem to that where it is known that an i.i.d. source is directly communicated over a channel within a certain probability of excess distortion and we want to calculate a lower bound on the capacity of the channel -- this is Theorem 1 of \cite{paperTogether}. 

If each $G_i$ is considered to be a single unit of time, or in other words, over $G_i$, the uses of the channel is considered as a single channel use, we are thus able, by use of Theorem 1 of \cite{paperTogether} to communicate at rates
\begin{align}
R < R^E_{X^T}\left ( \frac{(T+\tau)D}{1-\lambda_{\tau}-\beta }\right ) \ \mbox{(per channel use)}
\end{align}
Total time of communication, though, has been $(T+\tau)k$ and there are $\lfloor (1 - \lambda_{\tau} - \beta)k \rfloor + 1$ $G_i$ slots over which the communication takes place. Noting that 
\begin{align}
\frac{\lfloor (1 - \lambda_{\tau} - \beta)k \rfloor + 1}{(T+\tau)k} \geq 
                             \frac{1 - \lambda_{\tau} - \beta}{(T+\tau)}
\end{align}
it follows that rates
\begin{align}
R < \frac{1 - \lambda_{\tau} - \beta}{(T+\tau)} R^E_{X^T}\left ( \frac{(T+\tau)D}{1-\lambda_{\tau}-\beta }\right )
\end{align}
are achievable for reliable communication over  the original channel $c$ per channel use of $c$. 
\end{proof}

Roughly, the details of codebook generation and decoding are as follows:

Let reliable communication be desired at a rate $R$ which is such that there exist $\tau, \beta, T$ such that  
\begin{align}
R < \frac{1-\lambda_{\tau}-\beta}{T+\tau}R^E_{X^T} \left ( \frac{(T + \tau)D}{1 - \lambda_{\tau} - \beta} \right )
\end{align}

Generate $C_1, C_2, \ldots$. Assume that this knowledge is available at both encoder and decoder

Generate $2^{\lfloor k(T + \tau) R \rfloor}$ codewords using the simulation procedure.

If the number of `g' slots is less than $\lfloor (1-\lambda_{\tau}-\beta)k \rfloor $, declare error.

Else, restrict attention only the first $\lfloor (1-\lambda_{\tau}-\beta)k \rfloor $ $A_i^{(g)}$ slots which have been renamed $G_1, G_2, \ldots$.

Over these slots, the codebook generation is i.i.d., and then, use the procedure from Theorem 1 of \cite{paperTogether}.

\section{$R^P_X(D) \leq R^E_X(D)$ if $X$ is stationary and satisfies $\psi$-mixing} \label{S8}

\begin{lemma}\label{RPLessRE}
Let $X = (X_t, t = 1, 2, 3, \ldots)$ be stationary process which satisfies $\psi$-mixing. Then, $R^P_X(D) \leq R^E_X(D)$.
\end{lemma}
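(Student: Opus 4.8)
The plan is to prove the inequality by the route that is standard for stationary ergodic sources: take a single good block code that achieves small expected per-letter distortion, concatenate it over many consecutive sub-blocks, and use the ergodic theorem to upgrade control of the \emph{mean} distortion into control of the \emph{probability} of excess distortion. The delicate point is to land exactly at distortion level $D$ rather than at $D$ plus a slack, and this is handled by working at level $D-\epsilon$ and then invoking continuity of $R^E_X$.

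Fix $\epsilon>0$ and let $R>R^E_X(D-\epsilon)$ be arbitrary. Then there is an expected-distortion code sequence of rate at most $R$ with $\limsup_n E[\frac1n d^n(X^n, f^n(e^n(X^n)))]\le D-\epsilon$, so there exists a (large) block length $T$ with $E[\frac1T d^T(X^T,\hat X^T)]\le D-\epsilon/2$, where $\hat X^T=f^T(e^T(X^T))$. I freeze this $(e^T,f^T)$ as an inner code and build a super-block code of length $N=mT$: partition $X^{mT}$ into consecutive sub-blocks $Z_j=X_{(j-1)T+1}^{jT}$ and apply $(e^T,f^T)$ to each. The number of messages is $2^{m\lfloor TR\rfloor}$, so the rate is $\frac{\lfloor TR\rfloor}{T}\le R$. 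Writing $W_j=\frac1T d^T(Z_j, f^T(e^T(Z_j)))$, the inner code is a fixed deterministic map, so $W_j=\phi(Z_j)$ for one fixed $\phi$, and the normalized super-block distortion equals $\frac1{mT}d^{mT}=\frac1m\sum_{j=1}^m W_j$.

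By Lemma \ref{L4} the $T$-blocked process $(Z_j)$ is stationary $\psi$-mixing, hence ergodic by Lemma \ref{L2}; therefore $(W_j)=(\phi(Z_j))$ is stationary and ergodic with $E[W_1]=E[\frac1T d^T(X^T,\hat X^T)]\le D-\epsilon/2$. By Birkhoff's ergodic theorem $\frac1m\sum_{j=1}^m W_j\to E[W_1]$ almost surely, hence in probability, so $\Pr(\frac1m\sum_{j=1}^m W_j>D)\to 0$ as $m\to\infty$ since $E[W_1]<D$. For block lengths $n$ not of the form $mT$, apply the construction to the first $\lfloor n/T\rfloor$ sub-blocks and encode the at most $T$ residual letters arbitrarily; as $d$ is bounded, this residual contributes at most $T\,d_{\max}/n\to 0$ to the normalized distortion and does not affect the limit. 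Thus rate $R$ is achievable under the probability of excess distortion criterion at level $D$, so $R^P_X(D)\le R$; taking the infimum over $R>R^E_X(D-\epsilon)$ gives $R^P_X(D)\le R^E_X(D-\epsilon)$. Finally, since $R^E_X(D)=\lim_T\frac1T R^E_{X^T}(TD)$ and each $R^E_{X^T}(\cdot)=R^I_{X^T}(\cdot)$ is convex and non-increasing, $D\mapsto\frac1T R^E_{X^T}(TD)$ is convex and so is the pointwise limit $R^E_X$; being convex and finite on an interval, $R^E_X$ is continuous at interior $D$, whence $R^E_X(D-\epsilon)\to R^E_X(D)$ as $\epsilon\to 0$ and the claim follows.

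The genuine obstacle is that a bound on the expected distortion does \emph{not} by itself control the probability of excess distortion at the same level: Markov's inequality only yields $\Pr(\frac1n d^n>D)\le E[\frac1n d^n]/D$, a constant bounded away from $0$. The real work is done by ergodicity of the blocked source (Lemmas \ref{L4} and \ref{L2}), which forces the empirical per-block distortion to concentrate on its mean; this is exactly why the mixing (hence ergodicity) hypothesis on $X$ is indispensable. The only remaining subtlety, hitting level $D$ exactly, is resolved by the $D-\epsilon$ device together with the convexity, and therefore continuity, of $R^E_X$.
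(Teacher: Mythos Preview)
Your proof is correct and follows the same skeleton as the paper's: both arguments use Lemma~\ref{L4} and Lemma~\ref{L2} to obtain ergodicity of the $T$-blocked process, work at distortion level $D-\epsilon$, and finish by continuity of $R^E_X$ (the paper cites this from Gallager; you derive it from convexity of the pointwise limit of the convex functions $D\mapsto\frac{1}{T}R^E_{X^T}(TD)$).

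The difference lies in how the central achievability step is carried out. The paper outsources it to Gallager's Theorem~9.8.2, observing that the random-coding proof there already yields a probability-of-excess-distortion bound (equation~(9.8.10) in Gallager), and then iterates over the $L$-th order super-source to reach $R_L(D-\epsilon)$ and finally $R(D-\epsilon)$. You instead work directly from the \emph{operational} definition of $R^E_X$: extract one good block-$T$ code with expected per-letter distortion below $D-\epsilon/2$, concatenate it, and invoke Birkhoff's ergodic theorem on the bounded functional $W_j=\phi(Z_j)$ of the stationary ergodic blocked process to force the empirical distortion to concentrate on its mean. Your route is more self-contained and elementary---it does not require unpacking a textbook achievability proof---while the paper's route is terser but leans on Gallager as a black box. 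Both land at exactly the same inequality $R^P_X(D)\le R^E_X(D-\epsilon)$ before the continuity step.
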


\begin{proof}
By Lemma \ref{L2}, $X$ is ergodic. Thus, $X$ is stationary, ergodic.

The proof now, relies on \cite{GallagerInformationTheory}, Pages 490-499, where the rate-distortion theorem is proved for stationary, ergodic sources.

First, note the notation in \cite{GallagerInformationTheory}. \cite{GallagerInformationTheory} defines $R_L(D)$ and $R(D)$, both on Page 491. Note that by the rate-distortion theorem for an i.i.d. source, it follows that
\begin{align}
R_L(D) \ \mbox{(notation in \cite{GallagerInformationTheory})}  = \frac{1}{T} R^E_{X^T}(TD) \ \mbox{(our notation)}
\end{align}
Thus, 
\begin{align} \label{H}
	R(D) \ \mbox{(notation in \cite{GallagerInformationTheory})} &= \lim_{T \to \infty}  R^E_{X^T}(TD) \ \mbox{(our notation)}\\
	\notag &= R^E_X(D) \ \mbox{(our notation)}
\end{align}

Look at Theorem 9.8.2 of \cite{GallagerInformationTheory}. This theorem holds if probability of excess distortion criterion is used instead of the expected distortion criterion: see (9.8.10) of \cite{GallagerInformationTheory}. By mapping the steps carefully, it follows that rate $R_1(D - \epsilon)$ (notation in \cite{GallagerInformationTheory}) is achievable for source-coding the source $X$ under a probability of excess distortion $D$ for all $\epsilon > 0$. Note that it follows that rates $R_1(D - \epsilon)$ are achievable, not necessarily rates $R_1(D)$. This is because in (9.8.10), when making further arguments, $\hat{d}$ is made $D + \frac{\delta}{2}$ and not $D$. Hence, we need to keep a distortion level smaller than $D$ in $R_1(\cdot)$ to make this rate achievable for the probability of excess distortion criterion. Next, we construct the $L^{th}$ order super source as described on Page 495 of \cite{GallagerInformationTheory}: Define ${X'}^{t} = X_{(t-1)L+1}^{tL}$. Then, $X' =  ({X'}^{t}, t = 1, 2, 3, \ldots)$ is the $n^{th}$ order super-source. $X'$ is stationary, $\psi$-mixing because $X$ is (Lemma \ref{L4}), and thus, stationary, ergodic, by Lemma~\ref{L2}. One can thus use Theorem 9.8.2 of \cite{GallagerInformationTheory} again to argue that rate $R_L(D-\epsilon)$ (notation of \cite{GallagerInformationTheory}) is achievable for source-coding the source $X$ under a probability of excess distortion $D$ for all $\epsilon > 0$. By taking a limit as $L \to \infty$ (the limit exists by Theorem 9.8.1 in \cite{GallagerInformationTheory}), it follows that rate $R(D-\epsilon)$ (notation in \cite{GallagerInformationTheory}) is achievable for source-coding the source $X$ under a probability of excess distortion $D$ for all $\epsilon > 0$. As stated at the end of the proof of Theorem 9.8.1 in \cite{GallagerInformationTheory}, $R(D)$ is a continuous function of $D$. Thus, it follows that rates $<R(D)$ are achievable for source-coding the source $X$ under a probability of excess distortion $D$. At this point, the lemma follows from (\ref{H}).
\end{proof}

\section{Generalization of Theorem 1 in Part I to stationary sources satisfying $\psi$-mixing} \label{S9}

Before we prove the theorem, note the following: Let $f: [0, \infty) \rightarrow [0, \infty )$ be a convex $\cup$ non-increasing function. Let $f(0) = K$. Let $0 < a < a'$. Then,
\begin{align}
|f(a) - f(a')| \leq \frac{K}{a}(a'-a)
\end{align}

\begin{thm}\label{MainTheorem}
Let $c$ be a channel over which the source $X$, assumed to be stationary, $\psi$-mixing, is directly communicated within probability of excess distortion $D$, $D > 0$. Then, rates $<R^P_{X}(D)$ are reliably achievable over $c$.

\end{thm}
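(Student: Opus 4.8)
The plan is to feed the channel hypothesis into Lemma~\ref{MainMainLemma}, push $T \to \infty$ to obtain an $R^E$-type achievable rate, and then pass from the $R^E$ statement to the claimed $R^P$ statement via Lemma~\ref{RPLessRE}. First I would observe that ``directly communicated within probability of excess distortion $D$'' is exactly the hypothesis of Lemma~\ref{MainMainLemma}, so that lemma applies verbatim. Fixing $\lambda > 0$ and $\beta > 0$ small and choosing $\tau$ with $\lambda_\tau \le \lambda$, abbreviate $\alpha = 1 - \lambda_\tau - \beta$; Lemma~\ref{MainMainLemma} then guarantees that every rate
$$R < \frac{\alpha}{T+\tau}\, R^E_{X^T}\!\left(\frac{(T+\tau)D}{\alpha}\right)$$
is reliably achievable over $c$ for every $T \ge 1$.

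The heart of the argument is to evaluate the limit of this bound as $T \to \infty$. I would split the prefactor as $\frac{\alpha}{T+\tau} = \alpha \cdot \frac{T}{T+\tau}\cdot\frac{1}{T}$, where $\frac{T}{T+\tau}\to 1$, and then replace the awkward argument $\frac{(T+\tau)D}{\alpha}$ by the clean argument $\frac{TD}{\alpha}$. This replacement is precisely where the convexity inequality recorded just before the theorem is used: since $R^E_{X^T}$ is convex, non-increasing, and satisfies $R^E_{X^T}(0) \le T\log|\mathbb{X}|$, applying that inequality with $a = TD/\alpha$ and $a' = (T+\tau)D/\alpha$ yields
$$\frac{1}{T}\left|R^E_{X^T}\!\left(\tfrac{(T+\tau)D}{\alpha}\right) - R^E_{X^T}\!\left(\tfrac{TD}{\alpha}\right)\right| \le \frac{R^E_{X^T}(0)\,\tau}{T^2} \le \frac{\tau\log|\mathbb{X}|}{T},$$
which tends to $0$ as $T \to \infty$. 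Combining this estimate with the identity $R^E_X(D') = \lim_{T\to\infty}\frac{1}{T}R^E_{X^T}(TD')$ recorded in Section~\ref{S3}, applied at $D' = D/\alpha$, shows that the bound converges to $\alpha\,R^E_X(D/\alpha)$. Hence, for each admissible choice of $\beta$ and $\tau$, all rates $R < \alpha\,R^E_X(D/\alpha)$ are reliably achievable over $c$.

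Next I would let $\alpha \to 1^-$ by sending $\beta \to 0$ and $\tau \to \infty$ (so $\lambda_\tau \to 0$). Because $D > 0$, the rate-distortion function $R^E_X$ is finite and convex, hence continuous, in a neighborhood of $D$, so $D/\alpha \to D^+$ forces $R^E_X(D/\alpha) \to R^E_X(D)$ and therefore $\alpha\,R^E_X(D/\alpha) \to R^E_X(D)$. Since each admissible $\alpha$ already makes rates below $\alpha\,R^E_X(D/\alpha)$ achievable, every rate $R < R^E_X(D)$ is reliably achievable over $c$. Finally, Lemma~\ref{RPLessRE} gives $R^P_X(D) \le R^E_X(D)$, so any rate $R < R^P_X(D)$ also satisfies $R < R^E_X(D)$ and is therefore reliably achievable; this is exactly the assertion of the theorem.

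I expect the limit computation of the middle paragraph to be the main obstacle: the bound delivered by Lemma~\ref{MainMainLemma} evaluates $R^E_{X^T}$ not at the ideal point $TD/\alpha$ but at the shifted point $(T+\tau)D/\alpha$, and one must verify that this mismatch is annihilated after the $1/T$ normalization before the limit formula can be invoked. The linear growth $R^E_{X^T}(0) \le T\log|\mathbb{X}|$ together with the $1/a$ factor in the convexity bound is what produces the needed $O(\tau/T)$ decay, and the convexity inequality stated before the theorem is tailored precisely for this step.
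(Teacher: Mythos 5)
Your proposal is correct and follows essentially the same route as the paper's proof: invoke Lemma~\ref{MainMainLemma}, use the convexity inequality stated before the theorem to absorb the shift from $TD/\alpha$ to $(T+\tau)D/\alpha$ after the $1/T$ normalization, pass to the limit $T\to\infty$ via $R^E_X(D')=\lim_{T\to\infty}\frac{1}{T}R^E_{X^T}(TD')$, let $\lambda_\tau,\beta\to 0$ using continuity of $R^E_X$ in $D$, and finish with Lemma~\ref{RPLessRE}. Your multiplicative splitting of the prefactor is only a cosmetic reorganization of the paper's four-term telescoping decomposition.
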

\begin{proof}
Since $R^P_X(D) \leq R^E_X(D)$ by Lemma \ref{RPLessRE} and since it is known that 
\begin{align}
R^E_X(D) = \lim_{T \to \infty} \frac{1}{T} R^E_{X^T}(TD) 
\end{align}
it is sufficient to prove that rates less than
\begin{align}
\lim_{T \to \infty} \frac{1}{T} R^E_{X^T}(TD)
\end{align}
are reliably achievable over $c$.

To this end, denote
\begin{align}
D' \triangleq \frac{D}{1-\lambda_{\tau}-\beta}
\end{align}

Then, 
\begin{align}
       &  \frac{1 - \lambda_{\tau} - \beta}{T+\tau}  R^E_{X^T}  ((T+\tau)D') - 
      \lim_{T \to \infty} \frac{1}{T} R^E_{X^T}(TD') \\
=\,    & \frac{1 - \lambda_{\tau} - \beta}{T+\tau}  R^E_{X^T}  ((T+\tau)D')  - 
                                                                    \frac{1}{T+\tau}  R^E_{X^T} ((T+\tau)D')  \label{T1} \\
      & + \frac{1}{T+\tau}  R^E_{X^T} ((T+\tau)D') - \frac{1}{T}  R^E_{X^T} ((T+\tau)D') \label{T2} \\
      & + \frac{1}{T}  R^E_{X^T} ((T+\tau)D') - \frac{1}{T}  R^E_{X^T} (TD')  \label{T3} \\
          & + \frac{1}{T}  R^E_{X^T} (TD') -  \lim_{T \to \infty} \frac{1}{T} R^E_{X^T}(TD') \label{T4}
\end{align} 

Expression in (\ref{T1}) is 
\begin{align}
 \frac{-\lambda_{\tau} - \beta}{T+\tau}  R^E_{X^T}  ((T+\tau)D')
\end{align}
Note that 
\begin{align}
R^E_{X^T}  ((T+\tau)D') \leq T \log | \mathbb X|
\end{align}
Thus, the absolute value of the expression in (\ref{T1}) is upper bounded by $(\lambda_{\tau} + \beta) \log|\mathbb X|$.

Expression in (\ref{T2}) is
\begin{align}
\frac{-\tau}{T} \left ( \frac{1}{T+\tau} R^E_{X^T}  ((T+\tau)D') \right )
\end{align}
Note that 
\begin{align}
R^E_{X^T}  ((T+\tau)D') \leq T\log |\mathbb X|
\end{align} 
It then follows that expression in (\ref{T2}) $\to 0$ as $T \to \infty$.

 Expression in (\ref{T3}) is 
 \begin{align}
 \frac{1}{T}  R^E_{X^T} \left ( T(D' + \frac{\tau}{T}D') \right ) - \frac{1}{T}  R^E_{X^T} (TD')
 \end{align}
 $\frac{1}{T}R^E_{X^T} (TD)$ is a convex $\cup$ non-negative function of $D$, upper bounded by $\log |\mathbb X|$. It follows that 
  \begin{align}
&\quad\  \frac{1}{T}  R^E_{X^T} \left ( T(D' + \frac{\tau}{T}D') \right ) - \frac{1}{T}  R^E_{X^T} (TD') \nonumber \\
 &\leq \frac{\log|\mathbb X|}{D'} \left ( (D' + \frac{\tau}{T}D') - D'\right )  \nonumber \\
 & \to 0 \ \mbox{as} \ T \to \infty
 \end{align}

 Expression in (\ref{T4}) $\to 0$ as $T \to \infty$.
 
 By noting the bound on the absolute value of expression  (\ref{T1}) proved above and by noting, as proved above, that expressions in (\ref{T2}), (\ref{T3}), and (\ref{T4}) $\to 0$ as $T \to \infty$, it follows that $\exists \ \epsilon_T \to 0$ as $T \to \infty$, possibly depending on $\lambda_{\tau}$ and $\beta$ such that  
 \begin{align}
	 \left |  \frac{1 - \lambda_{\tau} - \beta}{T+\tau}  R^E_{X^T}  ((T+\tau)D') -  \lim_{T \to \infty} \frac{1}{T} R^E_{X^T}(TD') \right | \leq (\lambda_{\tau} + \beta)|\mathbb X| + \epsilon_T\hspace{-1ex}
 \end{align}

 By Lemma \ref{MainMainLemma}, and by recalling that $D' = \frac{D}{1 - \lambda_{\tau} - \beta}$ it follows that rates less than
 \begin{align}
 \lim_{T \to \infty} \frac{1}{T} R^E_{X^T}\left ( T\frac{D}{1 - \lambda_{\tau} - \beta}\right )
 - (\lambda_{\tau} + \beta)|\mathbb X| - \epsilon_T
 \end{align}
 are achievable reliably over $c$.

By using the fact that $\lambda_{\tau}$ and $\beta$ can be made arbitrarily small and $\epsilon_T \to 0$ as $T \to \infty$, and that, the function  \begin{align}
 \lim_{T \to \infty} \frac{1}{T} R^E_{X^T}(TD) 
 \end{align}
 is continuous  in $D$, it follows that rates less than 
 \begin{align}
 \lim_{T \to \infty} R^E_{X^T}(TD)
 \end{align}
 are reliably achievable over $c$ from which, as stated at the beginning of the proof of this theorem, it follows that rates less than $R^P_{X}(D)$ are reliably achievable over $c$. 
 \end{proof}
 
Note that statements concerning resource consumption have not been made either in Theorem \ref{MainTheorem} or Lemma \ref{MainMainLemma} in this paper whereas they are part of Theorem 1 in \cite{paperTogether}. For the corresponding statements concerning resource consumption, see Section \ref{Discuss}. Further, the way Theorem \ref{MainTheorem} or Lemma \ref{MainMainLemma} are stated in this paper, the channel does not belong to a set whereas in Theorem 1 in \cite{paperTogether}, the channel may belong to a set. For the corresponding statement where the channel may belong to a set, see Section \ref{Discuss}.

 \section{Application to Markoff chains and order $m$\\ Markoff chains} \label{S10}
 
 Let $X = (X_t, t = 1, 2, \ldots)$ be a stationary, irreducible, aperiodic Markoff chain evolving on a 
 finite set $\mathbb X$.  By Lemma \ref{L3}, $X$ is $\psi$-mixing. $X$ is thus, stationary, $\psi$-mixing and thus, Theorem \ref{MainTheorem} holds for stationary, irreducible Markoff chains evolving on a finite set.

Let $X = (X_i, i \in \mathbb N)$ be an order $m$ stationary Markoff chain. Define $Z_i = X_{(i-1)m+1}^{im}$. Then, $Z = (Z_i, i \in \mathbb N)$ is a Markoff chain. By Lemma \ref{L4}, $Z$ is stationary. Assume that this $Z$ is irreducible, aperiodic. By Lemma \ref{L5}, $X$ is $\psi$-mixing, and thus, Theorem \ref{MainTheorem} holds.

\section{Discussion} \label{Discuss}

It is really (\ref{O19}) that is crucial to the proof, not that $\psi$-mixing criterion; this is because it is (\ref{O19}) which is needed for carrying out the simulation procedure described in Section \ref{S6}. Other places where $\psi$-mixing criterion is used in minor ways is  to prove ergodicity and some other properties needed to finish parts of the proof but it is possible that they can be proved by use of (\ref{O19}) too (or can just be taken as assumptions). However, the assumption of $\psi$-mixing suffices, and since this condition holds for Markoff and order $m$ Markoff sources (under stationarity, irreducibility, aperiodicity assumptions as stated above), the theorem has been proved for quite a large class of sources.

In Theorem 1 of \cite{paperTogether}, the channel may belong to a set whereas the way Lemma \ref{MainMainLemma} and Theorem \ref{MainTheorem} are stated in this paper, the channel does not belong to a set. However, it is easy to see that the proof of Lemma \ref{MainLemma} does not require knowledge of the channel transition probability; only the end-to-end description that the channel communicates the source to within the distortion level is needed; for this reason, Theorem \ref{MainTheorem} in this paper generalizes to the case when the channel belongs to a set for the same reason as \cite{paperTogether}. A source-channel separation theorem has also been stated and proved in Theorem 2 in \cite{paperTogether}; this can be done in this paper too. Statements concerning resource consumption have not been made in this paper in Lemma \ref{MainMainLemma} or Theorem \ref{MainTheorem}. They follow for the same reason as in \cite{paperTogether}: in this context, note that the codebook in the proof of Lemma \ref{MainLemma} consists of codewords which are independent of each other and further,  each codeword has the distribution as the process $X$; this point is the only observation needed to prove the statements concerning resource consumption. Finally, generalization to the unicast, multi-user setting, namely Theorem 3 and 4 of \cite{paperTogether} follow for the same reason as in \cite{paperTogether}. In this context, the only observation that needs to be made is the same as above that the codewords in the proof of Lemma \ref{MainLemma} follow the distribution of the process $X$.

\section{Future research directions} \label{S12}

\begin{itemize}
\item
Generalize Theorem \ref{MainTheorem} to arbitrary stationary, ergodic processes, not just those which satisfy $\psi$-mixing, to the extent possible.
\item
In particular, explore a generalization to $B$-processes \cite{Gray}, the closure of the set of Markoff chains of finite order.
\item
Consider an alternate proof strategy for proving Theorem \ref{MainTheorem} which uses methods from classical ergodic and rate-distortion theory, that is, methods similar to, for example, \cite{GallagerInformationTheory} and \cite{Gray}, and thus, does not rely on the decomposition (\ref{O19}). This might help prove Theorem \ref{MainTheorem} for general stationary, ergodic sources, not just those which satisfy $\psi$-mixing.
\item
Further, consider a strategy based on the theory of large deviations, in the first instance, for irreducible, aperiodic Markoff chain source. For i.i.d. sources, a large deviations based method was indeed used in Part 1 \cite{paperTogether}.
\item
Generalize Theorem \ref{MainTheorem} to stationary, ergodic sources which evolve continuously in space and time (some assumptions might be needed on the source). Since only the end-to-end description of the channel as communicating the source $X$ within distortion level $D$ is used and not the exact dynamics of the channel, the proof given in Part 1 for Theorems 2 and 4, and for similar theorems in this paper, directly holds for channels which evolve continuously in space and time. The channel $k = <k^n>_1^\infty$ would however need to be rigorously defined for continuous time evolution. Further, the encoder-decoder $<e^n, f^n>_1^\infty$ would need to be defined on appropriate spaces so that the interconnection $<e^n \circ k^n \circ f^n>_1^\infty$ makes sense.
\item
Research the possibility of an operational rate-distortioon theory for stationary, ergodic sources (satisfying other conditions). An operational theory for i.i.d. sources has been presented in \cite{MukulPhDThesis}.
\item
The channel has been assumed to belong to a set in Part I \cite{paperTogether} and the same is the case in this paper.  However, the source is assumed to be known. Research the generalization of results in this paper to compound sources.
\end{itemize}

\section{Acknowledgements}
The authors are extremely grateful to Prof. Robert Gray for his time and many insightful discussions. The authors also thank Prof. Richard Bradley for many important e-mail conversations which helped shed more light on the $\psi$-mixing criterion.

\appendix


\section{ Proofs of properties of $\psi$-mixing sequences} \label{Append}
Proof of Lemma \ref{L1} :
\begin{proof}
From (\ref{O18}) and (\ref{O17}), it follows that $\psi(\tau)$ can be alternatively be written as
\begin{multline} \label{O54}
\psi(\tau)=\sup_{t \in \mathbb N}\sup_{\mathbb{A}\in \mathcal{F}_{1}^{t},\mathbb{B}\in \mathcal{F}_{t+\tau+1}^{\infty},\Pr(X_{1}^{t}\in \mathbb{A})>0,\Pr(X_{t+\tau+1}^{\infty}\in \mathbb{B})>0}\\ 
\left  | \frac{\Pr(X_{t+\tau+1}^{\infty}\in \mathbb{B}|X_1^t \in \mathbb{A})}{\Pr(X_{t+\tau+1}^{\infty}\in \mathbb{B})} -1 \right |
\end{multline}
From (\ref{O54}), it follows that $\exists\lambda_{\tau}\rightarrow 0$ as $\tau\rightarrow\infty$ such that $\forall t\in \mathbb{N}$, $\forall\tau\in \mathbb{W}$, $\forall \mathbb{A}\in \mathcal{F}_{1}^{t}, \forall \mathbb{B}\in \mathcal{F}_{t+\tau+1}^{\infty}, \Pr(X_{1}^{t}\in \mathbb{A})>0, \Pr(X_{t+\tau+1}^{\infty}\in \mathbb{B})>0,$
\begin{align}\label{O55}
|\Pr(X_{t+\tau+1}^{\infty}\in \mathbb{B}|X_{1}^{t}\in \mathbb{A})-\Pr(X_{t+\tau+1}^{\infty}\in \mathbb{B})|\leq\lambda_{\tau}\Pr(X_{t+\tau+1}^{\infty}\in \mathbb{B})
\end{align}
From (\ref{O55}), it follows tha $\exists\lambda_{\tau}\rightarrow 0$ as $\tau\rightarrow\infty$ such that $\forall t\in \mathbb{N}$, $\forall\tau\in \mathbb{W}$, $\forall \mathbb{A}\in \mathcal{F}_{1}^{t}, \forall \mathbb{B}\in \mathcal{F}_{t+\tau+1}^{\infty}, \Pr(X_{1}^{t}\in \mathbb{A})>0, \Pr(X_{t+\tau+1}^{\infty}\in \mathbb{B})>0,$
\begin{align}\label{O56}
(1-\lambda_{\tau})\Pr(X_{t+\tau+1}^{\infty}\in \mathbb{B})\leq \Pr(X_{t+\tau+1}^{\infty}\in \mathbb{B}|X_{1}^{t}\in \mathbb{A})
\end{align}
Specializing (\ref{O56}), it follows that,
\begin{align}\label{O57}
(1-\lambda_{\tau})\Pr(X_{t+\tau+1}^{t+\tau+T}\in \mathbb{B})\leq \Pr(X_{t+\tau+1}^{t+\tau+T}\in \mathbb{B}|X_{1}^{t}\in \mathbb{A})
\end{align}
$\forall t\in \mathbb{N}$, $\forall\tau\in \mathbb{W}$, $\forall T\in \mathbb{W}$, $\forall \mathbb{A}\subset \mathbb{X}^{t}$, $\forall \mathbb{B}\subset \mathbb{X}^{T}$, $\Pr(X_{1}^{t}\in \mathbb{A})>0$, $\Pr(X_{t+\tau+1}^{t+\tau+T}\in \mathbb B) >0.$

Note that $\Pr(X_{t+\tau+1}^{t+\tau+T})=P_{T}(\mathbb B)$ . Substituting this into (\ref{O57}), it follows that $\forall t\in \mathbb{N}$, $\forall\tau\in \mathbb{W}$, $\forall T\in \mathbb{W}$, $\forall \mathbb{A}\subset \mathbb{X}^{t}$, $\forall \mathbb{B}\subset \mathbb{X}^{T}$, $\Pr(X_{1}^{t}\in \mathbb{A})>0$, $\Pr(X_{t+\tau+1}^{t+\tau+T}\in \mathbb B) >0,$
\begin{align} \label{O58}
(1-\lambda_{\tau})P_{T}(\mathbb B)\ \leq \Pr(X_{t+\tau+1}^{t+\tau+T}\in \mathbb{B}|X_{1}^{t}\in \mathbb{A})
\end{align}
If $\lambda_{\tau}=0$, it follows from (\ref{O55}), that for $\Pr(X_{t+\tau+1}^{t+\tau+T}\in \mathbb{B})>0, P(X_{1}^{t}\in \mathbb{A})>0,$
\begin{align} \label{O59}
\Pr(X_{t+\tau+1}^{t+\tau+T}\in \mathbb{B}|X_{1}^{t}\in \mathbb{A})=(1-\lambda_{\tau})P_{T}(\mathbb B)
\end{align}
and the above equation also holds if $P_{T}(\mathbb{B})=0$ but $P(X_{1}^{t}\in \mathbb{A})>0$; thus, (\ref{O19}) holds with any probability distribution $P_{t,\tau,T,\mathbb{A}}'$ on $\mathbb{X}^{T}.$

If $\lambda_{\tau}>0$, define
\begin{align}\label{O60}
 P_{t,\tau,T,\mathbb{A}}'(\displaystyle \mathbb{B})=\frac{P(X_{t+\tau+1}^{t+\tau+T}\in \mathbb{B}|X_{1}^{t}\in \mathbb{A})-(1-\lambda_{\tau})P_{T}(\mathbb{B})}{\lambda_{\tau}}
\end{align}
From (\ref{O58}) , it follows that $\forall t\in \mathbb{N}$, $\forall\tau\in \mathbb{W}$, $\forall T\in \mathbb{W}$, $\forall \mathbb{A}\in \mathbb{X}^{t}$, $\forall \mathbb{B}\in \mathbb{X}^{T}$, $P(X_{1}^{t}\in \mathbb{A})>0$,  $P(X_{t+\tau+1}^{t+\tau+T}\in \mathbb{B})>0$, $\exists \lambda_\tau \to 0$ as $\tau \to \infty$ such that 
\begin{align}\label{O61}
P(X_{t+\tau+1}^{t+\tau+T}\in \mathbb{B}|X_{1}^{t}\in \mathbb{A})=(1-\lambda_{\tau})P_{T}(\mathbb{B})+\lambda_{\tau}P_{t,\tau,T,\mathbb{A}}'(\mathbb{B})
\end{align}
for some probability distribution $P_{t,\tau,T,\mathbb{A}}'$ on $\mathbb{X}^{T}$ which may depend on $t, \tau,\allowbreak T, \mathbb A.$

Finally, note that if $P_{T}(\mathbb B) =0$, (\ref{O19}) still holds with definition (\ref{O60}) for $P_{t,\tau,T,\mathbb{A}}'$ since all the three probabilities in question are individually zero.

This finishes the proof of the lemma.
\end{proof}

Proof of Lemma \ref{L2}:
\begin{proof}
In order to prove this lemma, it is sufficient to prove the condition on Page 19 in \cite{Shields} (which implies ergodicity as is proved on the same page of \cite{Shields}), and which can be re-stated as
\begin{align} \label{O62}
\lim_{N\rightarrow\infty}\frac{1}{N}\sum_{\tau=0}^{N-1}P(X_{1}^{t}=a_{1}^{t},\ X_{\tau+1}^{\tau+T}=b_{1}^{T})=P(X_{1}^{t}=a_{1}^{t})P(X_{1}^{T}=b_{1}^{T})
\end{align}
$\forall t\in \mathbb{N}, \forall T\in \mathbb{N}, \forall a_{1}^{t}\in \mathbb{X}^{t}, \forall b_{1}^{T}\in \mathbb{X}^{T}.$

To this end, note, first, that from (\ref{O55}), it follows that $\exists\lambda_{\tau}\rightarrow 0$ as\linebreak $\tau\rightarrow\infty$ such that $\forall t\in \mathbb{N}, \forall \mathbb{A}\in \mathcal{F}_{1}^{t}, \forall \mathbb{B}\in \mathcal{F}_{t+\tau+1}^{\infty}, P(X_{1}^{t}\in \mathbb{A})>0, P(X_{t+\tau+1}^{\infty}\in \mathbb{B})>0,$
\begin{align} \label{O63}
	(1-\lambda_{\tau})P(X_{t+\tau+1}^{\infty}\in \mathbb{B})&\leq P(X_{t+\tau+1}^{\infty}\in \mathbb{B}|X_{1}^{t}\in \mathbb{A})\\
	\notag &\leq(1+\lambda_{\tau})P(X_{t+\tau+1}^{\infty}\in \mathbb{B})
\end{align}
Thus, $\exists\lambda_{\tau}\rightarrow 0$ as $\tau\rightarrow\infty$ such that $\forall t\in \mathbb{N}, \forall \mathbb{A}\in \mathcal{F}_{1}^{t}, \forall \mathbb{B}\in \mathcal{F}_{t+\tau+1}^{\infty},  P(X_{1}^{t}\in$ A) $>0, P(X_{t+\tau+1}^{\infty}\in \mathbb{B})>0,$
\begin{align}\label{O64}
	&\quad\ (1-\lambda_{\tau})P(X_{1}^{t}\in \mathbb{A})P(X_{t+\tau+1}^{\infty}\in \mathbb{B}) \nonumber \\
	&\leq P(X_{1}^{t}\in \mathbb{A},\ X_{t+\tau+1}^{\infty}\in \mathbb{B}) \nonumber \\
	&\leq(1+\lambda_{\tau})P(X_{1}^{t}\in \mathbb{A})P(X_{t+\tau+1}^{\infty}\in \mathbb{B})
\end{align}
If $P(X_{1}^{t}=a_{1}^{t})=0$, then both the left hand side and the right hand side in (\ref{O62}) are zero. If $P(X_{1}^{T}=b_{1}^{T})=0$, by use of the assumption that $X$ is stationary and thus noting that $P(X_{\tau+1}^{\tau+T})=P(X_{1}^{T})$ , it follows that both the left hand side and the right hand side in (\ref{O62}) are zero. If neither $P(X_{1}^{t}= a_{1}^{t})=0$ nor $P(X_{1}^{T}=b_{1}^{T})=0$ is zero, it follows from (\ref{O64}) that for $\tau\geq t,$
\begin{align} \label{O65}
	&\quad\ (1-\lambda_{\tau-t})P(X_{1}^{t}=a_{1}^{t})P(X_{\tau+1}^{\tau+T}=b_{1}^{T}) \nonumber \\
	&\leq P(X_{1}^{t}=a_{1}^{t},\ X_{\tau+1}^{\tau+T}=b_{1}^{T}) \nonumber \\
	&\leq(1+\lambda_{\tau-t})P(X_{1}^{t}=a_{1}^{t})P(X_{\tau+1}^{\tau+T}=b_{1}^{T})
\end{align}
Denote
\begin{align}\label{O66}
C \triangleq \sum_{\tau=0}^{t-1}P(X_{1}^{t}=a_{1}^{t},\ X_{\tau+1}^{\tau+T}=b_{1}^{T})
\end{align}
It follows from (\ref{O65}) by taking a sum over $\tau$ that and by noting that since the process is stationary, $P(X_{\tau+1}^{\tau+T}=b_{1}^{T})=P(X_{1}^{T}=b_{1}^{T})$ and substituting (\ref{O66}) in (\ref{O65})
\begin{align} \label{O67}
	&\quad\ C+\left(N-t-\displaystyle \sum_{\tau=t}^{N-1}\lambda_{\tau-t}\right)P(X_{1}^{t}=a_{1}^{t})P(X_{1}^{T}=b_{1}^{T}) \nonumber \\
	&\leq \sum_{\tau=0}^{N-1}P(X_{1}^{t}=a_{1}^{t},\ X_{\tau+1}^{\tau+T}=b_{1}^{T}) \nonumber \\
	&\leq C+\left(N-t+\sum_{\tau=t}^{N-1}\lambda_{\tau-t}\right)P(X_{1}^{t}=a_{1}^{t})P(X_{1}^{T}=b_{1}^{T})
\end{align}
After noting that $C$ and $t$ are constants, that $\lambda_{\tau}\rightarrow 0$ as $\tau\rightarrow\infty$, after dividing by $N$ and taking limits as $ N\rightarrow\infty$ in (\ref{O67}) , it follows that
\begin{align}\label{O68}
\lim_{N\rightarrow\infty}\sum_{\tau=0}^{N-1}P(X_{1}^{t}=a_{1}^{t},\ X_{\tau+1}^{\tau+T}=b_{1}^{T})=P(X_{1}^{t}=a_{1}^{t})P(X_{1}^{T}=b_{1}^{T})
\end{align}
thus proving (\ref{O62}) , and thus, proving that the process $X$ is ergodic if it is stationary, $\psi$-mixing.
\end{proof}

Proof of Lemma \ref{L3} :
\begin{proof}
Consider the two-sided extension $V=(V_{t},\ t\in \mathbb{Z})$ of $X$, defined on a probability space $(\Omega'', \Sigma'', P'')$. That is,
\begin{align} \label{O69}
P''(V_{t+1}=j|V_{t}=i)=p_{ij},\quad -\infty<t<\infty
\end{align}
where $p_{ij}$ denotes the probability
\begin{align}\label{O70}
P(X_{t+1}=j|X_{t}=i) ,\quad 1\leq t<\infty
\end{align}
which is independent of $t$ since $X$ is Markoff. Such an extension is possible, see for example \cite{DED}. Denote by $\mathbb{X}^{\mathbb{Z}}$, the set of doubly-infinite sequences taking values in  $\mathbb{X}$. The Borel-sigma field on $\mathbb{X}^{\mathbb{Z}}$ is the standard construction, see Pages 1-5 of \cite{Shields}. Note that $V$ is finite-state, stationary, irreducible, aperiodic.

Denote the Borel-sigma field on $\mathbb{X}^{\mathbb{Z}}$ by $\mathcal{H}_{-\infty}^{\infty}$ and as was the case when defining $\mathcal{F}_{a}^{b}$, denote the Borel sigma-field on $\mathbb{X}_{a}^{b}$ by $\mathcal{H}_{a}^{b}, -\infty\leq a\leq b\leq\infty.$

For the process $V$, consider the standard definition of $\psi$-mixing as stated in \cite{Bradley}, and thus, define
\begin{multline}\label{O71}
\psi_{V}(\tau) \triangleq
\sup_{t\in \mathbb{Z}}
\sup_{\mathbb{K}\in \mathcal{H}_{-\infty}^{t},\mathbb{L}\in \mathcal{H}_{t+\tau+1}^{\infty},P''(V_{-\infty}^{t}\in \mathbb{K})>0,P''(V_{t+\tau+1}^{\infty}\in \mathbb{L})>0}\\
\left |\frac{P''(V_{-\infty}^{t}\in \mathbb{K},V_{t+\tau+1}^{\infty}\in \mathbb{L})}{P''(V_{-\infty}^{t}\in \mathbb{K})P''(Z_{t+\tau+1}^{\infty}\in \mathbb{L})}-1 \right |
\end{multline}
The process $V$ is said to be $\psi$-mixing if $\psi_{V}(\tau)\rightarrow 0$ as $\tau\rightarrow\infty$. Since $V$ is stationary, irreducible, aperiodic, finite-state Markoff chain, by Theorem 3.1 of \cite{Bradley}, $V$ is $\psi$-mixing.

Let $\mathbb{A}\in \mathcal{F}_{1}^{t}$. Consider the set $\mathbb{A}''$ defined as follows:
\begin{align}\label{O72}
 \mathbb{A}''= \{(\ldots, a_{-n}, \ldots a_{-1}, a_{0}, {a}_{1}, \ldots, a_{t})|(a_{1}, {a}_{2}, \ldots, a_{t})\in \mathbb{A}\}
\end{align}
Then, since $X$ is stationary and $V$ is the double-sided extension of $X,$
\begin{align} \label{O73}
 P''(V_{-\infty}^{t} \in \mathbb{A}'') =P(X_{1}^{t}\in \mathbb{A})
\end{align}
and by use of the Markoff property, and again, noting that $V$ is the double- sided extension of $X$, it follows that
\begin{align} \label{O74}
P''(V_{-\infty}^{t}\in \mathbb{A}'', V_{t+\tau+1}^{\infty}\in \mathbb{B})=P(X_{1}^{t}\in \mathbb A,\ X_{t+\tau+1}^{\infty}\in \mathbb{B})
\end{align}
By use of (\ref{O73}) and (\ref{O74}), it follows that
\begin{align}\label{O75}
\left | \frac{P''(V_{-\infty}^{t}\in \mathbb{A}'',V_{t+\tau+1}^{\infty}\in \mathbb{B})}{P''(V_{-\infty}^{t}\in \mathbb{A}'')P''(V_{t+\tau+1}^{\infty}\in \mathbb{B})}-1 \right |= \left |\frac{P(X_{1}^{t}\in \mathbb{A},X_{t+\tau+1}^{\infty}\in \mathbb{B})}{P(X_{1}^{t}\in \mathbb{A})P(X_{t+\tau+1}^{\infty}\in \mathbb{B})}-1 \right |
\end{align}
where $\mathbb{A}\in \mathcal{F}_{1}^{t}, \mathbb{B}\in \mathcal{F}_{t+\tau+1}^{\infty}, P(X_{1}^{t}\in \mathbb{A})>0, P(X_{t+\tau+1}^{\infty}\in \mathbb{B})>0.$

Thus,
\begin{align} \label{O76}
	&\quad\sup_{\mathbb{A}\in \mathcal{F}_{1}^{t},\mathbb{B}\in \mathcal{F}_{t+\tau+1}^{\infty},P(X_{1}^{t}\in \mathbb{A})>0,P(X_{t+\tau+1}^\infty \in \mathbb{B})>0} 
\left |\frac{P''(V_{-\infty}^{t}\in \mathbb{A}'',V_{t+\tau+1}^{\infty}\in \mathbb{B})}{P''(V_{-\infty}^{t}\in \mathbb{A}')P''(V_{t+\tau+1}^{\infty}\in \mathbb{B})}-1 \right | \\
	&= 
\sup_{\mathbb{A}\in \mathcal{F}_{1}^{t},\mathbb{B}\in \mathcal{F}_{t+\tau+1}^{\infty},P(X_{1}^{t}\in \mathbb{A})>0,P(x_{t+\tau+1}^{\infty}\in \mathbb{B})>0}
\left |\frac{P(X_{1}^{t}\in \mathbb{A},X_{t+\tau+1}^{\infty}\in \mathbb{B})}{P(X_{1}^{t}\in \mathbb{A})P(X_{t+\tau+1}^{\infty}\in \mathbb{B})}-1 \right |
\nonumber
\end{align}
Thus,
\begin{align}\label{O77}
	&\sup_{ \mathbb{K}\in \mathcal{G}_{-\infty}^t, \mathbb{L}\in \mathcal{G}_{t+\tau+1}^{\infty},P''(V_{1}^{t}\in \mathbb{K})>0,P''(V_{t+\tau+1}^{\infty}\in \mathbb{L})>0}
\left |\frac{P''(V_{-\infty}^{t}\in \mathbb{K},V_{t+\tau+1}^{\infty}\in \mathbb{L})}{P''(V_{-\infty}^{t}\in \mathbb{K})P''(V_{t+\tau+1}^{\infty}\in \mathbb{L})}-1 \right |  \\
	\notag&\geq
\sup_{\mathbb{A}\in \mathcal{F}_{1}^{t},\mathbb{B}\in \mathcal{F}_{t+\tau+1}^{\infty},P(X_{1}^{t}\in \mathbb{A})>0,P(X_{t+\tau+1}^{\infty}\mathbb{B})>0}
\left |\frac{P(X_{1}^{t}\in \mathbb{A},X_{t+\tau+1}^{\infty}\in \mathbb{B})}{P(X_{1}^{t}\in \mathbb{A})P(X_{t+\tau+1}^{\infty}\in \mathbb{B})}-1 \right |
\end{align}
This is because there are sets $\mathbb{K}\in \mathcal{G}_{-\infty}^{t}$ and $\mathbb{L}\in \mathcal{G}_{t+\tau+1}^{\infty}$ which are not of the form $\mathbb{A}''$ and $\mathbb{B}''$ respectively.

Denote the function $\psi$, defined in (\ref{O18}) for the process $X$ by $\psi_{X}$. It follows from (\ref{O77}) that $\psi_{Z}(\tau)\geq\psi_{X}(\tau)$ . Since $Z$ is $\psi$-mixing as stated above, by definition, $\psi_{Z}(\tau)\rightarrow 0$ as $\tau\rightarrow\infty$. Thus, $\psi_{X}(\tau)\rightarrow 0$ as $\tau\rightarrow\infty$, and thus, $X$ is $\psi$-mixing.
\end{proof}

Proof of Lemma \ref{L4}:
\begin{proof}
Stationary of $Z$ follows directly from the definition of stationarity.

Denote the $\psi$ function for $X$ and $Z$ by $\psi_{X}$ and $\psi_{Z}$ respectively. Note that
the $\psi$ function for the process $Z$ can be written as follows:
\begin{multline}\label{O78}
\psi_{Z}(\tau) \triangleq 
\sup_{t \in \mathbb N}
\sup_{\mathbb{A}\in \mathcal{F}_{1}^{tL},\mathbb{B}\in \mathcal{F}_{tL+\tau L+1}^{\infty},P(X_{1}^{tL}\in \mathbb{A})>0,P(X_{tL+\tau L+1}^{\infty}\in \mathbb{B})>0}\\
\left |
\frac
{P(X_{1}^{tL}\in \mathbb{A},X_{tL+\tau L+1}^{\infty}\in \mathbb{B})}
{P(X_{1}^{tL}\in \mathbb{A})P(X_{tL+\tau L+1}^{\infty}\in \mathbb{B})}
-1
	\right |
\end{multline}
Note that when calculating the $\psi$ function for $Z$, the supremum is taken over a lesser number of sets than when calculating the $\psi$ function for $X$. It follows that $\psi_{Z}(\tau)\leq\psi_{X}(\tau)$. Since $X$ is $\psi$-mixing, $\psi_{X}(\tau)\rightarrow 0$ as $\tau\rightarrow\infty$. It follows that $\psi_{Z}(\tau)\rightarrow 0$ as $\tau\rightarrow\infty$. Thus, $Z$ is $\psi$-mixing.
\end{proof}
Proof of Lemma \ref{L5}:
\begin{proof}
 Note that $Z$ is stationary by Lemma \ref{L4}. Thus, $Z$ is a stationary, irreducible, aperiodic, finite-state Markoff chain, evolving on a finite set, and by Lemma \ref{L3} , $\psi$-mixing.

Since the set $\mathbb{Z}$ is finite, the Borel sigma field on $\mathbb{Z}^{\infty}$ can be constructed analogously to that on $\mathbb{X}^{\infty}$; see Page 1-2 of \cite{Shields}. Denote this Borel sigma field by $\mathcal{G}^{\infty}$. Define the Borel sigma fields $\mathcal{G}_{a}^{b}$, analogously as was done for $\mathcal{F}_{1}^{\infty}.$ Denote the underlying probability space by $(\Omega', \Sigma', P')$

An element of $\mathbb{Z}^{\infty}$ is denoted by $(z_{1}, z_2, \ldots)$ where $z_{i}\in \mathbb{Z}=\mathbb{X}^{L}$. The $j^{th}$ component of $z_{i}$ will be denoted by $z_{i}(j)$ .

Define
\begin{multline} \label{O79}
\psi_{X}(\tau) \triangleq
\sup_{t \in \mathbb N}
\sup_{\mathbb{A}\in \mathcal{F}_{1}^{t},\mathbb{B}\in \mathcal{F}_{t+\tau+1}^{\infty},P(X_{1}^{t}\in \mathbb{A})>0,P(X_{t+\tau+1}^{\infty}\in \mathbb{B})>0}\\
\left |
\frac{P(X_{1}^{t}\in \mathbb{A},X_{t+\tau+1}^{\infty}\in \mathbb{B})}
        {P(X_{1}^{t}\in \mathbb{A})P(X_{t+\tau+1}^{\infty}\in \mathbb{B})}
-1
	\right |
\end{multline}
and
\begin{multline}\label{O80}
\psi_{Z}(\tau)= \triangleq
\sup_{t\in \mathbb{N}}
\sup_{\mathbb{A}'\in \mathcal{G}_{1}^{t},\mathbb{B}'\in \mathcal{G}_{t+\tau+1}^{\infty},P'(Z_{1}^{t}\in \mathbb{A}')>0,P'(Z_{t+\tau+1}^{\infty}\in \mathbb{B}')>0}\\
\left |
\frac{P'(Z_{1}^{t}\in \mathbb{A}',Z_{t+\tau+1}^{\infty}\in \mathbb{B}')}
        {P'(Z_{1}^{t}\in \mathbb{A}')P'(Z_{t+\tau+1}^{\infty}\in \mathbb{B}')}
-1
	\right |
\end{multline}

By definition, the processes $X$ and $Z$ are $\psi$-mixing if $\psi_{X}(\tau)$ and $\psi_{Z}(\tau)$ tend to zero as $\tau\rightarrow\infty$, respectively.

For $\mathbb{A}\in \mathcal{F}_{1}^{t}, \mathbb{B}\in \mathcal{F}_{t+\tau+1}^{\infty}, P(X_{1}^{t}\in \mathbb{A})>0, P(X_{t+\tau+1}^{\infty}\in \mathbb{B})>0$, define,
\begin{align}\label{O81}
\kappa_{X}(t,\ \tau,\ \mathbb{A},\ \mathbb{B})
\triangleq
\left |
\frac
{P(X_{1}^{t}\in \mathbb{A},X_{t+\tau+1}^{\infty}\in \mathbb{B})}
{P(X_{1}^{t}\in \mathbb{A})P(X_{t+\tau+1}^{\infty}\in \mathbb{B})}
-1
\right |
\end{align}

Define
\begin{align}\label{O82}
k_{1} & \triangleq \left \lceil\frac{t}{L}\right \rceil \\
 k_{2}& \triangleq \left \lfloor\frac{t+\tau+1}{L} \right \rfloor \nonumber
\end{align}

Assume that $\tau\geq 4L$. It follows that $k_{1}\leq k_{2}$ (a weaker assumption is possible, but this suffices).

Given A and $\mathbb{B}$, define $\mathbb{A}'$ and $\mathbb{B}'$ by
\begin{align} \label{O83}
\mathbb{A}' & \triangleq \{(a_{1}, {a}_2, \ldots, a_{k_{1}L})|(a_{1}, a_2, \ldots, a_t) \in \mathbb A \} \\
\mathbb{B}' & \triangleq \{(b_{k_{2}L+1}, b_{k_{2}L+2}, \ldots)|(b_{t+\tau+1}, b_{t+\tau+2}, \ldots)\in \mathbb{B}\} \nonumber 
\end{align}

Think, now of $(a_1, \ldots, a_{k_1L})$ as $a'=(a'_1, \ldots, a'_{k_1})$, a $k_1$ length sequence, where $a'_i \in \mathbb Z$. This can be done by defining $a_{i}'=a_{(i-1)L+1}^{iL}$. Analogously, think of $(b_{k_{2}L+1}', b_{k_{2}L+2}', \ldots)$ as $(b_{k_{2}+1}',b_{k_{2}+2}', \ldots)$ where $b_{k_{i}}'$ is defined analogously to how $a_{i}'$ was defined. Think of $\mathbb{A}'$ and $\mathbb{B}'$, now, as sequences of elements in $\mathbb{Z}$ in the obvious way.

Define, for $\mathbb{J}\in \mathcal{G}_{1}^{q}, \mathbb{U}\in \mathcal{G}_{q+q'+1}^{\infty},$
\begin{align} \label{O84}
\kappa_{Z}(q,\ q',\ \mathbb{J},\ \mathbb{U})\triangleq 
\left |
\frac
{P'(Z_{1}^{t}\in\mathbb J,Z_{t+\tau+1}^{\infty}\in \mathbb{U})}
{P'(Z_{1}^{t}\in\mathbb J)P'(Z_{t+\tau+1}^{\infty}\in \mathbb{U})}
-1
\right |
\end{align}
Then, it follows that for $\tau\geq 4L,$
\begin{align} \label{O85}
\kappa_{X}(t, \tau, \mathbb{A}, \mathbb{B})=\kappa_{Z}(k_{1}, k_{2}-k_{1}, \mathbb{A}', \mathbb{B}')
\end{align}
Denote
\begin{multline}\label {O86}
\mu_X(t,\ \tau) =
\sup_{\mathbb{A}\in \mathcal{F}_{1}^{t},\mathbb{B}\in \mathcal{F}_{t+\tau+1}^{\infty},P(X_{1}^{t}\in \mathbb{A})>0,P(X_{t+\tau+1}^{\infty}\in \mathbb{B})>0}\\
\left |
\frac
{P(X_{1}^{t}\in \mathbb{A},X_{t+\tau+1}^{\infty}\in \mathbb{B})}
{P(X_{1}^{t}\in \mathbb{A})P(X_{t+\tau+1}^{\infty}\in \mathbb{B})}
-1
	\right |
\end{multline}
\begin{multline*}
 \mu_{Z}(q,\ q')=
\sup_{\mathbb{J}\in \mathcal{G}_{1}^{q},\mathbb{U}\in \mathcal{G}_{q+q'+1}^{\infty},P'(Z_{1}^{q}\in \mathbb{J})>0,P'(Z_{t+\tau+1}^{\infty}\in \mathbb{U})>0}\\
\left |
\frac
{P'(Z_{1}^{t}\in\mathbb J, Z_{t+\tau+1}^{\infty}\in \mathbb{U})}
{P'(Z_{1}^{t}\in\mathbb J)P'(Z_{t+\tau+1}^{\infty}\in \mathbb{U})}
-1
	\right |
\end{multline*}
It follows from (\ref{O85}) by taking supremum over sets $\mathbb{A}\in \mathcal{F}_{1}^{t}$ and $\mathbb{B}\in \mathcal{F}_{t+\tau+1}^{\infty}$ and then, noting that there are sets $\mathbb{J}\in \mathcal{G}_{1}^{k_{1}}$ and $\mathbb{U}\in \mathcal{G}_{k_{2}+1}^{\infty}$ which are not of the form $\mathbb{A}'$ and $\mathbb{B}'$, that
\begin{align} \label{O87}
 \mu_{X}(t, \tau) \leq
\mu_{Z}(k_{1}, k_{2} - k_{1})=
\mu_{Z}\left (\left \lceil\frac{t}{L} \right \rceil, \left \lfloor\frac{t+\tau+1}{L} \right \rfloor- \left \lceil\frac{t}{L} \right \rceil \right )
\end{align}
Thus,
\begin{align} \label{O88}
\psi_{X}(\tau)\leq
\sup_{t\in \mathbb{N}}
\mu_{Z}\left (\left \lceil\frac{t}{L}\right \rceil, \left \lfloor\frac{t+\tau+1}{L}\right \rfloor - \left \lceil\frac{t}{L}\right \rceil \right )
\end{align}
The right hand side in the above equation $\rightarrow 0$ as $\tau\rightarrow\infty$ since $Z$ is $\psi-$ mixing. Thus, $\psi_{X}(\tau)\rightarrow 0$ as $\tau\rightarrow\infty$, and thus, $X$ is $\psi$-mixing.
\end{proof}

\bibliography{2-Agarwal}

\begin{thebibliography}{10}

\bibitem{MukulPhDThesis}
	M.~Agarwal, {\em A universal, operational theory of multi-user communication with
		fidelity criteria}, Ph.D. thesis, Massachusetts Institute of Technology
  (2012).

\bibitem{paperTogether}
M.~Agarwal, S.~K. Mitter, and A.~Sahai, \emph{Layered black-box, behavioral
  interconnection perspective and applications to the problem in
		communications, {P}art {I}: i.i.d. sources}, Communications in Information and Systems {\bf 17} (2017), no.~4, 193--217.

\bibitem{Bradley2}
R.~Bradley, \emph{On the $\psi$-mixing condition for stationary random
  sequences}, Transactions of the American Mathematical Society \textbf{276}
  (1983), no.~1,  55--66.

\bibitem{Bradley}
R.~C. Bradley, \emph{Basic Properties of Strong Mixing Conditions. A Survey and
  Some Open Questions}, Probability surveys \textbf{2} (2005), 107--144.

\bibitem{GallagerInformationTheory}
	R.~G. Gallager, {\em Information theory and reliable communication}, Wiley (1968).

\bibitem{Gray}
	R.~M. Gray, {\em Entropy and information theory}, Springer-Verlag (2011).

\bibitem{Kesten}
H.~Kesten and G.~L.~O. Brien, \emph{Examples of mixing sequences}, Duke
  Mathematical Journal \textbf{43} (1976), no.~2,  405--415.

\bibitem{Prohorov}
	Y.~V. Prohorov and Y.~A. Rozanov, {\em Probability theory: basic concepts, limit
		theorems, random processes}, Die Grundlehren der Mathematischen Wissenschaften
  in Einzeldarstellungen mit besonderer Berucksichtigung der Anwendungsgebiete,
		Band {\bf 157}, Springer-Verlag, 1st edition (1969).

\bibitem{DED}
	E.~M. R.~Douc and D.~Stoffer, {\em Nonlinear time series: theory, methods and
		applications with R examples}, Chapman and Hall/CRC, 1st edition (2014).

\bibitem{Shields}
	P.~C. Shields, {\em The ergodic theory of discrete sample paths}, American
  Mathematical Society (1996).

\bibitem{8}
E.~Yang and J.~C. Kieffer, \emph{On the redundancy of the fixed-database
  {L}empel-{Z}iv algorithm for $\phi$-mixing sources}, IEEE Transactions on
  Information Theory \textbf{43} (1997), no.~4,  1101--1111.

\bibitem{7}
Z.~Zhang and E.~Yang, \emph{An on-line universal lossy data compression
  algorithm via continuous codebook refinement --- {P}art 2: optimality for
  phi-mixing source models}, IEEE Transactions on Information Theory
  \textbf{42} (1996), no.~3,  822--836.

\end{thebibliography}

\address{Navi Mumbai, 410210, Maharashtra, India\\
\email{magar@alum.mit.edu}}

\address{Laboratory for information and decision systems \\
Department of Electrical Engineering and Computer Science \\
Massachusetts Institute of Technology\\
Cambridge, MA 02139-4307, USA\\
\email{mitter@mit.edu}}

\address{Department of Electrical Engineering and Computer Sciences\\
University of California, Berkeley\\
 Berkeley, CA 94720-1770, USA\\
\email{sahai@eecs.berkeley.edu}}

\end{document}